\numberwithin{equation}{section}
\theoremstyle{plain}
\newtheorem{thm}{\protect\theoremname}
  \theoremstyle{definition}
  \newtheorem{defn}[thm]{\protect\definitionname}
  \theoremstyle{definition}
  \newtheorem{example}[thm]{\protect\examplename}
\newcommand{\ie}{\textit{i.e.}}
\newcommand{\eg}{\textit{e.g.}}
\renewcommand{\mathbf}{\bm}
 \theoremstyle{plain}
  \newtheorem{proposition}[thm]{Proposition}
\definecolor{brown(traditional)}{rgb}{0.59, 0.29, 0.0}
\definecolor{blue(ryb)}{rgb}{0.01, 0.28, 1.0}
\definecolor{red}{rgb}{1.0, 0.0, 0.0}
\definecolor{magenta}{rgb}{1.0, 0.0, 1.0}
\definecolor{mahogany}{rgb}{0.75, 0.25, 0.0}
\definecolor{lavenderpurple}{rgb}{0.59, 0.48, 0.71}
\definecolor{olive}{rgb}{0.5, 0.5, 0.0}
\definecolor{brickred}{rgb}{0.8, 0.25, 0.33}
\definecolor{antiquefuchsia}{rgb}{0.57, 0.36, 0.51}
\definecolor{bole}{rgb}{0.47, 0.27, 0.23}
\definecolor{darkolivegreen}{rgb}{0.33, 0.42, 0.18}
\definecolor{deepjunglegreen}{rgb}{0.0, 0.29, 0.29}
  \providecommand{\definitionname}{Definition}
  \providecommand{\examplename}{Example}
\providecommand{\theoremname}{Theorem}
\begin{document}

\global\long\def\ga{\alpha}
\global\long\def\gb{\beta}
\global\long\def\ggm{\gamma}
\global\long\def\go{\omega}
\global\long\def\gs{\sigma}
\global\long\def\gd{\delta}
\global\long\def\gD{\Delta}
\global\long\def\vph{\varphi}
\global\long\def\gf{\varphi}
\global\long\def\gk{\kappa}
\global\long\def\gl{\lambda}
\global\long\def\gz{\zeta}
\global\long\def\gh{\eta}
\global\long\def\gy{\upsilon}

\global\long\def\eps{\varepsilon}
\global\long\def\epss#1#2{\varepsilon_{#2}^{#1}}
\global\long\def\ep#1{\eps_{#1}}

\global\long\def\wh#1{\widehat{#1}}

\global\long\def\spec#1{\textsf{#1}}

\global\long\def\ui{\wh{\boldsymbol{\imath}}}
\global\long\def\uj{\wh{\boldsymbol{\jmath}}}
\global\long\def\uk{\widehat{\boldsymbol{k}}}

\global\long\def\uI{\widehat{\mathbf{I}}}
\global\long\def\uJ{\widehat{\mathbf{J}}}
\global\long\def\uK{\widehat{\mathbf{K}}}

\global\long\def\bs#1{\boldsymbol{#1}}
\global\long\def\vect#1{\mathbf{#1}}
\global\long\def\bi#1{\textbf{\emph{#1}}}

\global\long\def\uv#1{\widehat{\boldsymbol{#1}}}
\global\long\def\cross{\times}

\global\long\def\ddt{\frac{\dee}{\dee t}}
\global\long\def\dbyd#1{\frac{\dee}{\dee#1}}
\global\long\def\dby#1#2{\frac{\partial#1}{\partial#2}}

\global\long\def\vct#1{\mathbf{#1}}

\global\long\def\partialby#1#2{\frac{\partial#1}{\partial x^{#2}}}
\newcommandx\parder[2][usedefault, addprefix=\global, 1=]{\frac{\partial#2}{\partial#1}}

\global\long\def\fall{,\quad\text{for all}\quad}

\global\long\def\reals{\mathbb{R}}

\global\long\def\rthree{\reals^{3}}
\global\long\def\rsix{\reals^{6}}
\global\long\def\rn{\reals^{n}}
\global\long\def\rt#1{\reals^{#1}}

\global\long\def\les{\leqslant}
\global\long\def\ges{\geqslant}

\global\long\def\dee{\textrm{d}}
\global\long\def\di{d}

\global\long\def\from{\colon}
\global\long\def\tto{\longrightarrow}
\global\long\def\lmt{\longmapsto}

\global\long\def\abs#1{\left|#1\right|}

\global\long\def\isom{\cong}

\global\long\def\comp{\circ}

\global\long\def\cl#1{\overline{#1}}

\global\long\def\fun{\varphi}

\global\long\def\interior{\textrm{Int}\,}

\global\long\def\sign{\textrm{sign}\,}
\global\long\def\sgn#1{(-1)^{#1}}
\global\long\def\sgnp#1{(-1)^{\abs{#1}}}

\global\long\def\dimension{\textrm{dim}\,}

\global\long\def\esssup{\textrm{ess}\,\sup}

\global\long\def\ess{\textrm{{ess}}}

\global\long\def\kernel{\mathop{\textrm{\textup{Kernel}}}}

\global\long\def\support{\mathop{\textrm{\textup{support}}}}

\global\long\def\image{\mathop{\textrm{\textup{Image}}}}

\global\long\def\diver{\mathop{\textrm{\textup{div}}}}

\global\long\def\sp{\mathop{\textrm{\textup{span}}}}

\global\long\def\resto#1{|_{#1}}
\global\long\def\incl{\iota}
\global\long\def\iden{\imath}
\global\long\def\idnt{\textrm{Id}}
\global\long\def\rest{\rho}
\global\long\def\extnd{e_{0}}

\global\long\def\proj{\textrm{pr}}

\global\long\def\L#1{L\bigl(#1\bigr)}
\global\long\def\LS#1{L_{S}\bigl(#1\bigr)}

\global\long\def\ino#1{\int_{#1}}

\global\long\def\half{\frac{1}{2}}
\global\long\def\shalf{{\scriptstyle \half}}
\global\long\def\third{\frac{1}{3}}

\global\long\def\empt{\varnothing}

\global\long\def\paren#1{\left(#1\right)}
\global\long\def\bigp#1{\bigl(#1\bigr)}
\global\long\def\biggp#1{\biggl(#1\biggr)}
\global\long\def\Bigp#1{\Bigl(#1\Bigr)}

\global\long\def\braces#1{\left\{  #1\right\}  }
\global\long\def\sqbr#1{\left[#1\right]}
\global\long\def\anglep#1{\left\langle #1\right\rangle }

\global\long\def\lsum{{\textstyle \sum}}

\global\long\def\bigabs#1{\bigl|#1\bigr|}

\global\long\def\stp{\text{\small\ensuremath{\bigodot}}}
\global\long\def\tp{\text{\small\ensuremath{\bigotimes}}}

\global\long\def\mi#1{\boldsymbol{#1}}
\global\long\def\mii{\mi I}
\global\long\def\mie#1#2{#1_{1}\cdots#1_{#2}}
\global\long\def\asmi#1{#1}
\global\long\def\ordr#1{\left\langle #1\right\rangle }

\global\long\def\symm#1{\paren{#1}}
\global\long\def\smtr{\mathcal{S}}

\global\long\def\perm{p}
\global\long\def\sperm{\mathcal{P}}

\global\long\def\oneto{1,\dots,}

\global\long\def\lisub#1#2#3{#1_{1}#2\dots#2#1_{#3}}

\global\long\def\lisup#1#2#3{#1^{1}#2\dots#2#1^{#3}}

\global\long\def\lisubb#1#2#3#4{#1_{#2}#3\dots#3#1_{#4}}

\global\long\def\lisubbc#1#2#3#4{#1_{#2}#3\cdots#3#1_{#4}}

\global\long\def\lisubbwout#1#2#3#4#5{#1_{#2}#3\dots#3\widehat{#1}_{#5}#3\dots#3#1_{#4}}

\global\long\def\lisubc#1#2#3{#1_{1}#2\cdots#2#1_{#3}}

\global\long\def\lisupc#1#2#3{#1^{1}#2\cdots#2#1^{#3}}

\global\long\def\lisupp#1#2#3#4{#1^{#2}#3\dots#3#1^{#4}}

\global\long\def\lisuppc#1#2#3#4{#1^{#2}#3\cdots#3#1^{#4}}

\global\long\def\lisuppwout#1#2#3#4#5#6{#1^{#2}#3#4#3\wh{#1^{#6}}#3#4#3#1^{#5}}

\global\long\def\lisubbwout#1#2#3#4#5#6{#1_{#2}#3#4#3\wh{#1}_{#6}#3#4#3#1_{#5}}

\global\long\def\lisubwout#1#2#3#4{#1_{1}#2\dots#2\widehat{#1}_{#4}#2\dots#2#1_{#3}}

\global\long\def\lisupwout#1#2#3#4{#1^{1}#2\dots#2\widehat{#1^{#4}}#2\dots#2#1^{#3}}

\global\long\def\lisubwoutc#1#2#3#4{#1_{1}#2\cdots#2\widehat{#1}_{#4}#2\cdots#2#1_{#3}}

\global\long\def\twp#1#2#3{\dee#1^{#2}\wedge\dee#1^{#3}}

\global\long\def\thp#1#2#3#4{\dee#1^{#2}\wedge\dee#1^{#3}\wedge\dee#1^{#4}}

\global\long\def\fop#1#2#3#4#5{\dee#1^{#2}\wedge\dee#1^{#3}\wedge\dee#1^{#4}\wedge\dee#1^{#5}}

\global\long\def\idots#1{#1\dots#1}
\global\long\def\icdots#1{#1\cdots#1}

\global\long\def\norm#1{\|#1\|}

\global\long\def\nonh{\heartsuit}

\global\long\def\nhn#1{\norm{#1}^{\nonh}}

\global\long\def\trps{^{{\scriptscriptstyle \textsf{T}}}}

\global\long\def\testfuns{\mathcal{D}}

\global\long\def\ntil#1{\tilde{#1}{}}

\global\long\def\eucl{E}

\global\long\def\mind{\alpha}
\global\long\def\vb{W}
\global\long\def\vbp{\pi}

\global\long\def\man{\mathcal{M}}
\global\long\def\odman{\mathcal{N}}
\global\long\def\subman{\mathcal{A}}

\global\long\def\vbt{\mathcal{E}}
\global\long\def\fib{\mathbf{V}}
\global\long\def\vbts{W}
\global\long\def\avb{U}

\global\long\def\chart{\varphi}
\global\long\def\vbchart{\Phi}

\global\long\def\jetb#1{J^{#1}}
\global\long\def\jet#1{j^{1}(#1)}
\global\long\def\tjet{\tilde{\jmath}}

\global\long\def\Jet#1{J^{1}(#1)}

\global\long\def\jetm#1{j_{#1}}

\global\long\def\coj{\mathfrak{d}}

\global\long\def\alt{\mathfrak{A}}

\global\long\def\pou{\eta}

\global\long\def\ext{{\textstyle \bigwedge}}
\global\long\def\forms{\Omega}

\global\long\def\dotwedge{\dot{\mbox{\ensuremath{\wedge}}}}

\global\long\def\vel{\theta}

\global\long\def\Jac{\mathcal{J}}

\global\long\def\contr{\raisebox{0.4pt}{\mbox{\ensuremath{\lrcorner}}}\,}
\global\long\def\fcontr{\raisebox{0.4pt}{\mbox{\ensuremath{\llcorner}}}\,}

\global\long\def\lie{\mathcal{L}}

\global\long\def\ssym#1#2{\ext^{#1}T^{*}#2}

\global\long\def\sh{^{\sharp}}

\global\long\def\spc{\mathcal{S}}
\global\long\def\sptm{\mathcal{E}}
\global\long\def\evnt{e}
\global\long\def\frame{\Phi}

\global\long\def\timeman{\mathcal{T}}
\global\long\def\zman{t}
\global\long\def\dims{n}
\global\long\def\m{\dims-1}
\global\long\def\dimw{m}

\global\long\def\wc{z}

\global\long\def\fourv#1{\mbox{\ensuremath{\mathfrak{#1}}}}

\global\long\def\pbform#1{\utilde{#1}}
\global\long\def\util#1{\raisebox{-5pt}{\ensuremath{{\scriptscriptstyle \sim}}}\!\!\!#1}

\global\long\def\utilJ{\util J}

\global\long\def\utilRho{\util{\rho}}

\global\long\def\body{B}
\global\long\def\man{\mathcal{M}}
\global\long\def\var{\mathcal{V}}
\global\long\def\base{\mathcal{X}}
\global\long\def\fb{\mathcal{Y}}
\global\long\def\dimb{n}
\global\long\def\dimf{m}

\global\long\def\bdry{\partial}

\global\long\def\gO{\varOmega}

\global\long\def\reg{\mathcal{R}}
\global\long\def\bdrr{\bdry\reg}

\global\long\def\bdom{\bdry\gO}

\global\long\def\bndo{\partial\gO}

\global\long\def\pis{x}
\global\long\def\xo{\pis_{0}}

\global\long\def\pib{X}

\global\long\def\pbndo{\Gamma}
\global\long\def\bndoo{\pbndo_{0}}
 \global\long\def\bndot{\pbndo_{t}}

\global\long\def\cloo{\cl{\gO}}

\global\long\def\nor{\mathbf{n}}
\global\long\def\Nor{\mathbf{N}}

\global\long\def\dA{\,\dee A}

\global\long\def\dV{\,\dee V}

\global\long\def\eps{\varepsilon}

\global\long\def\vs{\mathbf{W}}
\global\long\def\avs{\mathbf{V}}
\global\long\def\affsp{\mathbf{A}}
\global\long\def\pt{p}

\global\long\def\vbase{e}
\global\long\def\sbase{\mathbf{e}}
\global\long\def\msbase{\mathfrak{e}}
\global\long\def\vect{v}
\global\long\def\dbase{\sbase}

\global\long\def\vf{w}

\global\long\def\avf{u}

\global\long\def\stn{\varepsilon}

\global\long\def\rig{r}

\global\long\def\rigs{\mathcal{R}}

\global\long\def\qrigs{\!/\!\rigs}

\global\long\def\qd{\!/\,\!\kernel\diffop}

\global\long\def\dis{\chi}
\global\long\def\conf{\kappa}
\global\long\def\csp{\mathcal{Q}}

\global\long\def\embds{\textrm{Emb}}

\global\long\def\lc{A}

\global\long\def\lv{\dot{A}}
\global\long\def\alv{\dot{B}}

\global\long\def\fc{F}

\global\long\def\st{\sigma}

\global\long\def\bfc{\mathbf{b}}

\global\long\def\sfc{\mathbf{t}}

\global\long\def\stm{\varsigma}
\global\long\def\std{S}
\global\long\def\tst{\sigma}

\global\long\def\nhs{P}
\global\long\def\nhsa{P}
\global\long\def\nhsb{\underline{P}}

\global\long\def\soc{Z}

\global\long\def\tran{\mathrm{tr}}

\global\long\def\slf{R}

\global\long\def\sts{\varSigma}
\global\long\def\spstd{\mathfrak{S}}
\global\long\def\sptst{\mathfrak{T}}
\global\long\def\spnhs{\mathcal{P}}
\global\long\def\Ljj{\L{J^{1}(J^{k-1}\vb),\ext^{n}T^{*}\base}}

\global\long\def\spsb{\text{\Large\ensuremath{\Delta}}}

\global\long\def\ened{\mathfrak{w}}
\global\long\def\energy{\mathfrak{W}}

\global\long\def\ebdfc{T}
\global\long\def\optimum{\st^{\textrm{opt}}}
\global\long\def\scf{K}

\global\long\def\pform{\varsigma}
\global\long\def\vform{\beta}
\global\long\def\sform{\tau}
\global\long\def\flow{J}
\global\long\def\n{\m}
\global\long\def\cmap{\mathfrak{t}}
\global\long\def\vcmap{\varSigma}

\global\long\def\mvec{\mathfrak{v}}
\global\long\def\mveco#1{\mathfrak{#1}}
\global\long\def\smbase{\mathfrak{e}}
\global\long\def\spx{\simp}

\global\long\def\hp{H}
\global\long\def\ohp{h}

\global\long\def\hps{G_{\dims-1}(T\spc)}
\global\long\def\ohps{G_{\dims-1}^{\perp}(T\spc)}
\global\long\def\hpsx{G_{\dims-1}(\tspc)}
\global\long\def\ohpsx{G_{\dims-1}^{\perp}(\tspc)}

\global\long\def\fbun{F}

\global\long\def\flowm{\Phi}

\global\long\def\tgb{T\spc}
\global\long\def\ctgb{T^{*}\spc}
\global\long\def\tspc{T_{\pis}\spc}
\global\long\def\dspc{T_{\pis}^{*}\spc}

\global\long\def\fflow{\fourv J}
\global\long\def\fvform{\mathfrak{b}}
\global\long\def\fsform{\mathfrak{t}}
\global\long\def\fpform{\mathfrak{s}}

\global\long\def\maxw{\mathfrak{g}}
\global\long\def\frdy{\mathfrak{f}}
\global\long\def\ptnl{A}

\global\long\def\sobp#1#2{W_{#2}^{#1}}

\global\long\def\inner#1#2{\left\langle #1,#2\right\rangle }

\global\long\def\fields{\sobp pk(\vb)}

\global\long\def\bodyfields{\sobp p{k_{\partial}}(\vb)}

\global\long\def\forces{\sobp pk(\vb)^{*}}

\global\long\def\bfields{\sobp p{k_{\partial}}(\vb\resto{\bndo})}

\global\long\def\loadp{(\sfc,\bfc)}

\global\long\def\strains{\lp p(\jetb k(\vb))}

\global\long\def\stresses{\lp{p'}(\jetb k(\vb)^{*})}

\global\long\def\diffop{D}

\global\long\def\strainm{E}

\global\long\def\incomps{\vbts_{\yieldf}}

\global\long\def\devs{L^{p'}(\eta_{1}^{*})}

\global\long\def\incompsns{L^{p}(\eta_{1})}

\global\long\def\testf{\mathcal{D}}
\global\long\def\dists{\mathcal{D}'}

\global\long\def\codiv{\boldsymbol{\partial}}

\global\long\def\currof#1{\tilde{#1}}

\global\long\def\chn{c}
\global\long\def\chnsp{\mathbf{F}}

\global\long\def\current{T}
\global\long\def\curr{R}

\global\long\def\gdiv{\bdry\textrm{iv\,}}

\global\long\def\prop{P}

\global\long\def\aprop{Q}

\global\long\def\flux{\omega}
\global\long\def\aflux{S}

\global\long\def\fform{\tau}

\global\long\def\dimn{n}

\global\long\def\sdim{{\dimn-1}}

\global\long\def\contrf{{\scriptstyle \smallfrown}}

\global\long\def\prodf{{\scriptstyle \smallsmile}}

\global\long\def\ptnl{\varphi}

\global\long\def\form{\omega}

\global\long\def\dens{\rho}

\global\long\def\simp{s}
\global\long\def\ssimp{\Delta}
\global\long\def\cpx{K}

\global\long\def\cell{C}

\global\long\def\chain{B}

\global\long\def\ach{A}

\global\long\def\coch{X}

\global\long\def\scale{s}

\global\long\def\fnorm#1{\norm{#1}^{\flat}}

\global\long\def\chains{\mathcal{A}}

\global\long\def\ivs{\boldsymbol{U}}

\global\long\def\mvs{\boldsymbol{V}}

\global\long\def\cvs{\boldsymbol{W}}

\global\long\def\cee#1{C^{#1}}

\global\long\def\lone{L^{1}}

\global\long\def\linf{L^{\infty}}

\global\long\def\lp#1{L^{#1}}

\global\long\def\ofbdo{(\bndo)}

\global\long\def\ofclo{(\cloo)}

\global\long\def\vono{(\gO,\rthree)}

\global\long\def\vonbdo{(\bndo,\rthree)}
\global\long\def\vonbdoo{(\bndoo,\rthree)}
\global\long\def\vonbdot{(\bndot,\rthree)}

\global\long\def\vonclo{(\cl{\gO},\rthree)}

\global\long\def\strono{(\gO,\reals^{6})}

\global\long\def\sob{W_{1}^{1}}

\global\long\def\sobb{\sob(\gO,\rthree)}

\global\long\def\lob{\lone(\gO,\rthree)}

\global\long\def\lib{\linf(\gO,\reals^{12})}

\global\long\def\ofO{(\gO)}

\global\long\def\oneo{{1,\gO}}
\global\long\def\onebdo{{1,\bndo}}
\global\long\def\info{{\infty,\gO}}

\global\long\def\infclo{{\infty,\cloo}}

\global\long\def\infbdo{{\infty,\bndo}}

\global\long\def\ld{LD}

\global\long\def\ldo{\ld\ofO}
\global\long\def\ldoo{\ldo_{0}}

\global\long\def\trace{\gamma}

\global\long\def\pr{\proj_{\rigs}}

\global\long\def\pq{\proj}

\global\long\def\qr{\,/\,\reals}

\global\long\def\aro{S_{1}}
\global\long\def\art{S_{2}}

\global\long\def\mo{m_{1}}
\global\long\def\mt{m_{2}}

\global\long\def\yieldc{B}

\global\long\def\yieldf{Y}

\global\long\def\trpr{\pi_{P}}

\global\long\def\devpr{\pi_{\devsp}}

\global\long\def\prsp{P}

\global\long\def\devsp{D}

\global\long\def\ynorm#1{\|#1\|_{\yieldf}}

\global\long\def\colls{\Psi}

\global\long\def\semib{\mathrm{SB}}

\global\long\def\tm#1{\overrightarrow{#1}}
\global\long\def\tmm#1{\underrightarrow{\overrightarrow{#1}}}

\global\long\def\itm#1{\overleftarrow{#1}}
\global\long\def\itmm#1{\underleftarrow{\overleftarrow{#1}}}

\global\long\def\ptrac{\mathcal{P}}

\title[$k$-Jet Hyper-Stresses]{Hyper-Stresses in $k$-Jet Field Theories}

\author{Reuven Segev and J\k{e}drzej \'{S}niatycki}

\address{Reuven Segev\\
Department of Mechanical Engineering\\
Ben-Gurion University of the Negev\\
Beer-Sheva, Israel\\
rsegev@bgu.ac.il}

\address{J\k{e}drzej \'Sniatycki\\
Departments of Mathematics and Statistics, University of Calgary, Calgary, Alberta, and  University of Victoria, B.C., Canada,   sniatycki@gmail.com\medskip \medskip \\}

\keywords{Continuum mechanics; Classical field theories; Fiber bundle; Hyper-stress;
High order continuum mechanics.}

\thanks{\today}

\subjclass[2000]{74A10; 70S10; 53Z05; 58A32}
\begin{abstract}
For high-order continuum mechanics and classical field theories configurations
are modeled as sections of general fiber bundles and generalized velocities
are modeled as variations thereof. Smooth stress fields are considered
and it is shown that three distinct mathematical stress objects play
the roles of the traditional stress tensor of continuum mechanics
in Euclidean spaces. These objects are referred to as the variational
hyper-stress, the traction hyper-stress and the non-holonomic stress.
The properties of these three stress objects and the relations between
them are studied.

\end{abstract}

\maketitle

\section{Introduction}

This manuscript is concerned with hyper-stresses for theories formulated
on general fiber bundles over differentiable manifolds. Although the
study of higher order continuum mechanics entered the focus of attention
early in the second half of the 20th century (\eg, , \cite{toupin_elastic_1962,toupin_theories_1964,mindlin_micro-structure_1964,mindlin_second_1965}
it is still the subject of current research from both theoretical
(\eg, \cite{noll_edge_1990,dellisola_postulations_2015,dellisola_how_2012,podio-guidugli_cauchys_2015,fosdick_generalized_2016})
and practical (\eg, \cite{aifantis_role_1992,ru_simple_1993,askes_gradient_2011,bertram_compendium_2017})
interests.

In parallel, analogous questions engaged the mathematical physics
literature considering classical field theories (\eg, \cite{kijowski_symplectic_1979,leon_generalized_1985,binz_geometry_1988,gotay_momentum_2003,giachetta_advanced_2009}),
although the variational approach has been much more pronounced. While
in continuum mechanics fields are usually defined on a three-dimensional
body manifold or the three-dimensional physical space, classical field
theories are mostly formulated as sections of fiber bundles over space-time
manifold.

Following Walter Noll's \cite{noll_foundations_1959}, it is by now
accepted that the body object of continuum mechanics should be modeled
as a manifold devoid of a metric or a connection. Furthermore, fields
defined on the body, such as various order parameters or internal
degrees of freedom, cannot be always expected to be valued in Euclidean
spaces. Thus, formulations in the setting of general manifolds are
relevant in continuum mechanics as much as they are for physical field
theories.

While traditionally, $k$-th order hyper-stresses emerge in continuum
mechanics as derivatives of a strain energy density with respect to
derivatives of order $k$ of the configuration, in \cite{segev_forces_1986,segev_consistency_1991}
their existence follow from a representation theorem for forces which
are conjugate to the space of $C^{k}$-velocity fields. This setting
has the advantage that it applies to the geometry of general manifolds
and to hyper-stresses that may be as irregular as Borel measures.
It is noted that for a theory of order $k$ on a differentiable manifold,
the derivatives of a particular order $0<r\le k$, do not give rise
to an invariant geometric object. Rather, one has to use the $k$-jet
of the generalized velocity field combining all derivatives of order
less or equal to $k$ into a single invariant object. (See \cite{saunders_geometry_1989}
as a general reference on jet bundles.)

Among the peculiarities of continuum mechanics on manifolds, even
in the smooth case considered in this paper and $k=1$, we emphasize
the distinction between variational stresses and traction stresses
\cite{segev_notes_2013}. For classical continuum mechanics in a Euclidean
space, the same mathematical object determines the traction on subbodies\textemdash the
traction stress\textemdash and acts on velocity gradients to produce
power\textemdash the variational stress, in our terminology. Yet,
for the general geometric setting, two distinct mathematical objects
play these roles. While the variational stress determines the traction
stress, there is a class of traction stresses for each variational
stress field.

As we show below, it turns out that for high order theories, a variational
hyper-stress does not determine a unique traction hyper-stress. In
fact, a traction hyper-stress field together with the body hyper-force
encode more information than a variational hyper-stress field. Now,
a third object is needed, to which we refer as the non-holonomic stress
field. While the non-holonomic stress has been introduced in \cite{segev_geometric_2017}
for the case $k=2$ as a generalization of the variational stress
enabling one to apply integral transformations, here we show that
non-holonomic stresses are the objects that determine traction hyper-stresses
as defined in \cite{segev_jets_2017}.

In order to introduce the terminology and exhibit the various roles
played by these three distinct mathematical objects in the simplest
terms, consider the case where the body $\reg$ is identified with
its current configuration in $\reals^{3}$. Let $\vf^{j}$ be the
components of a velocity field and let $\vf_{,\mie ik}^{j}$ be its
$k$-th partial derivatives. The components of the variational stress
of highest order $\std_{j}^{\mie ik}$ act on the derivatives of the
velocity in the form
\begin{equation}
I=\int_{\reg}\std_{j}^{\mie ik}\vf_{,\mie ik}^{j}\dV.
\end{equation}
Since the partial derivatives are symmetric relative to permutations
of the indices $\mie ik$, the components of the variational stress
satisfy the same symmetry conditions. This symmetry condition for
the components of the variational stress emerges naturally in the
hyper-elastic case where the components of the variational stress
are given in terms of a potential function $\vph$ and the symmetric
components of the $k$-th partial derivatives of the configuration
$\conf_{,\mie ik}^{j}$ by
\begin{equation}
\std_{j}^{\mie ik}=\dby{\vph}{\conf_{,\mie ik}^{j}}.
\end{equation}

Returning to the expression for the action of the variational stress,
one has
\begin{equation}
\begin{split}I & =\int_{\reg}\left[\bigp{\std_{j}^{\mie ik}\vf_{,\mie i{k-1}}^{j}}_{,i_{k}}-\std_{j,i_{k}}^{\mie ik}\vf_{,\mie i{k-1}}^{j}\right]\dV,\\
 & =\int_{\bdry\reg}\std_{j}^{\mie ik}n_{i_{k}}\vf_{,\mie i{k-1}}^{j}\dA-\int_{\reg}\std_{j,i_{k}}^{\mie ik}\vf_{,\mie i{k-1}}^{j}\dV,
\end{split}
\end{equation}
where $n$ is the unit normal to the boundary. We will refer to a
tensor such as $\sfc_{j}^{\mie i{k-1}}=\std_{j}^{\mie ik}n_{i_{k}}$
as a hyper-traction and to $\bfc_{j}^{\mie i{k-1}}=\std_{j,i_{k}}^{\mie ik}$
as a body hyper-force. Thus, in the expression
\begin{equation}
\sfc_{j}^{\mie i{k-1}}=\std_{j}^{\mie ik}n_{i_{k}},
\end{equation}
the variational hyper-stress $\std$ acts as a traction hyper-stress
by determining the hyper-traction on the boundary. That is, a traction
hyper-stress $\tst_{j}^{\mie ik}$ determines hyper-tractions by 
\begin{equation}
\sfc_{j}^{\mie i{k-1}}=\tst_{j}^{\mie ik}n_{i_{k}}.
\end{equation}
Nevertheless, if the role of the traction hyper-stress is exhibited
in the last equation, there is no a-priori reason to assume that $\tst_{j}^{\mie ik}$
is symmetric relative to permutations of all indices $\mie ik$ and
not merely $\mie i{k-1}$. In other words a traction hyper-stress
need not be determined by a variational hyper-stress, but rather,
by a more general objects to which we refer as the non-holonomic stress.

One could postulate that traction hyper-stresses are indeed symmetric
relative to all the indices $\mie ik$. However, it turns out that
this condition cannot be formulated invariantly in the general geometric
case. Indeed, on general fiber bundles, a traction hyper-stress field
together with a body hyper-force determine a unique non-holonomic
hyper-stress field. A non-holonomic hyper-stress field does not act
on $k$-jets of velocity fields but rather, on the first jets of sections
of the $(k-1)$-jet bundle. Such sections of the $(k-1)$-jet bundle
need not be holonomic\textemdash compatible\textemdash hence the terminology.
It follows that non-holonomic hyper-stress fields obey less restrictive
symmetry conditions in comparison with variational stresses. Conversely,
a non-holonomic hyper-stress field determines a unique traction hyper-stress
field. Furthermore, a non-holonomic hyper-stress determines a unique
variational stress that may be used to compute the force on each subbody.
However, this variational hyper-stress does not encode enough information
as to determine the traction hyper-stress on each subbody. We find
it somewhat intriguing that for higher-order continuum mechanics\textemdash originally
based on the variational approach\textemdash it turns out that the
generalization of the Cauchy construction plays such a crucial role.

Section \ref{sec:Notation-and-Preliminaries} introduces the notation
and some of the terminology used in the sequel, mainly, that corresponding
to symmetric tensors. Section \ref{sec:Continuum-Mechanics-Jets}
reviews the fundamentals of continuum mechanics and field theories
on $k$-jet bundles of fiber bundles. In particular, variational hyper-stress
densities are introduced. Section \ref{sec:Hyper-Stresses,-the-Cauchy}
outlines higher-order theories and introduces traction hyper-stresses
using a framework which generalizes the classical Cauchy approach.
Section \ref{sec:Non-Holonomic-Variational-Stress} considers non-holonomic
stresses and their relations with traction stresses. Next, the variational
stress induced by a non-holonomic stress is presented in Section \ref{sec:Induced-Variational-Stresses}.
\emph{ }Finally, in Section \ref{sec:Spaces-of-Hyper-Stresses},
we describe the geometric setting in which elastic constitutive relations
are formulated.

\section{Notation and Preliminaries\label{sec:Notation-and-Preliminaries}}

Multi-index notation, as used in this manuscript, will help making
the algebraic expressions below more compact. In this section we introduce
the notation adopted here, with particular application to symmetric
tensors. (For additional details, see \cite{segev_jets_2017}.)

\subsection{Multi-indices}

A collection of indices $i_{1}\cdots i_{k}$, $i_{r}=\oneto n$ will
be represented as a multi-index $I$ and we will write $\abs I=k$,
the length of the multi-index. Multi-indices will be denoted by upper-case
roman letters and the associated indices will be denoted by the corresponding
lower case letters. In what follows, we will use the summation convention
for repeated indices as well as repeated multi-indices. Whenever the
syntax is violated, \eg, when a multi-index appears more than twice
in a term, it is understood that summation does not apply.

A multi-index $I$ induces a sequence $(\lisub I,n)$ in which $I_{r}$
is the number of times the index $r$ appears in the sequence $\mie ik$.
Thus, $\abs I=\sum_{r}I_{r}$. Multi-indices may be concatenated naturally
such that $\abs{IJ}=\abs I+\abs J$. Additional convenient notation
is introduced by 
\begin{equation}
I!:=I_{1}!\cdots I_{n}!,\qquad\gd_{J}^{I}:=\gd_{j_{1}}^{i_{1}}\cdots\gd_{j_{k}}^{i_{k}}.
\end{equation}

Greek letters will be used for strictly increasing multi-indices used
in the representation of alternating tensors and forms, \eg, 
\begin{equation}
\go=\go_{\gl}\dee x^{\gl}:=\go_{\mie{\gl}{\abs{\gl}}}\lisuppc{\dee x}{\gl_{1}}{\wedge}{\gl_{\abs{\gl}}}.
\end{equation}
To simplify the notation, the local volume element induced by a coordinate
system will be denoted by $\dee x$, \ie, 
\begin{equation}
\dee x=\lisuppc{\dee x}1{\wedge}n.
\end{equation}

\subsection{Permutations}

We denote by $\sperm_{l}$ the group of permutations of $(\oneto l)$.
For a multi-index $I=\mie il$ and a permutation $p\in\sperm_{l}$,
\begin{equation}
p(I)=I\comp p=i_{\perm(1)}\cdots i_{p(l)}.
\end{equation}
An $l$-permutation $\perm$ acts on an $l$-dimensional array by
$\perm(T)_{I}=T_{\perm(I)}$.

We observe that for a multi-index $I$, of all $\abs I!$ permutations
$\perm(I)$, there are $I!$ permutations that leave $I$ invariant.
Thus, there are $\abs I!/I!$ elements in the collection, $\sperm_{I}$,
containing permutations that give distinct multi-indices $J=\perm(I)$. 

We will also use the notation
\begin{equation}
\abs{\eps}_{J}^{I}=\begin{cases}
1, & \text{if }I\text{ may be obtained by a permutation of \ensuremath{J,}}\\
0, & \text{otherwise.}
\end{cases}
\end{equation}

\subsection{Symmetric tensors}

We will view an $l$-tensor as an $l$-multilinear mapping, an element
of $L^{l}(\avs,\vs)\simeq\bigp{\tp^{l}\avs^{*}}\otimes\vs$ for two
vector spaces $\avs$ and $\vs$. In order to simplify the notation,
for the rest of this section, we will use $\vs=\reals$. For other
finite dimensional vector spaces, the extension is straightforward.
A permutation $\perm$ acts on an $l$-tensor $T$ by
\begin{equation}
\perm(T)(\lisubb v1,l)=T(\lisubb v{\perm(1)},{\perm(l)})
\end{equation}
which implies that the array of $\perm(T)$ is indeed $\perm(T)_{I}=T_{\perm(I)}$.
A symmetric tensor $T$ satisfies the condition
\begin{equation}
\perm(T)=T,\qquad\text{or equivalently,}\qquad T_{p(I)}=T_{I},
\end{equation}
for all permutations $\perm\in\sperm_{l}$. A symmetric array is uniquely
determined by the elements of the form 
\begin{equation}
T_{\ordr I}=T_{\mii},
\end{equation}
where we use the convention that multi-indices denoted by bold faced
characters are non-decreasing, that is $\lisubbc i1{\le}l$, and alternatively,
when a multi-index appears inside angle brackets, it is implied that
it is permuted so that it is non-decreasing. When the summation convention
should be applied to such multi-indices, it is understood that the
summation is carried out only for non-decreasing occurrences.

For an $n$-dimensional vector space $\avs$, the subspace of symmetric
tensors, $L_{S}^{l}(\avs,\reals)$, has the dimension (see \cite{lewis_notes_nodate}),
\begin{equation}
\dimension L_{S}^{l}(\avs,\reals)=\frac{(n+l-1)!}{(n-1)!l!}.
\end{equation}

\subsection{Symmetrization}

We will use the symmetrization operator
\begin{equation}
\smtr:L^{l}(\avs,\vs)\tto L_{S}^{l}(\avs,\reals),\qquad\text{whereby,}\qquad\smtr(T)=\frac{1}{l!}\sum_{\perm\in\sperm_{l}}\perm(T),
\end{equation}
which implies that the array of the symmetrized tensor is the symmetrized
array, \ie, 
\begin{equation}
T_{\symm I}:=\smtr(T)_{I}=\frac{1}{l!}\sum_{\perm\in\sperm_{l}}T_{\perm(I)}.
\end{equation}
The symmetrized tensor product is $T\odot R=\smtr(T\otimes R)$. Using
the notation
\begin{equation}
\sbase^{J}:=\tp^{J}\sbase^{J}=\lisuppc{\sbase}{j_{1}}{\otimes}{j_{l}},\qquad\sbase^{\symm J}:=\stp^{J}\sbase^{J}=\lisuppc{\sbase}{j_{1}}{\odot}{j_{l}},
\end{equation}
a basis for the space of symmetric tensors may be formed by the elements$\{\sbase^{\symm{\mii}}\}$
(where we observe the $\mii$ multi-indices are non-decreasing). Thus,
one has the identification
\begin{equation}
L_{S}^{l}(\avs,\reals)\simeq\stp^{l}\avs^{*}.
\end{equation}

Consider the inclusion 
\begin{equation}
\incl_{S}:\stp^{l}\avs^{*}\tto\tp^{l}\avs^{*}.
\end{equation}
For a symmetric tensor $T$, one has
\begin{equation}
\begin{split}\incl_{S}(T) & =T=T_{J}\sbase^{J}=T_{J}\sbase^{\symm J},\\
 & =\sum_{\mii}\sum_{\perm\in\sperm_{\mii}}T_{\perm(\mii)}\sbase^{\symm{\perm(\mii)}},\\
 & =\sum_{\mii}\frac{\abs{\mii}!}{\mii!}T_{\mii}\sbase^{\symm{\mii}}\qquad\text{(no sum).}
\end{split}
\end{equation}
If we want the components of $T$ relative to the basis in $\odot^{l}\avs^{*}$
to be equal to the corresponding components of the array of $\incl_{S}(T)=T$,
we should use the modified basis 
\begin{equation}
\itmm{\sbase}^{\symm{\mii}}:=\frac{\abs{\mii}!}{\mii!}\sbase^{\symm{\mii}},\qquad\text{so that}\qquad T=T_{\mii}\itmm{\sbase}^{\symm{\mii}}.
\end{equation}
Similarly, one can easily show that for the primal basis $\{\sbase_{i}\}$,
\begin{equation}
\itmm{\sbase}^{\symm{\mii}}(\sbase_{\symm{\mi J}})=\gd_{\mi J}^{\mii},\qquad\sbase^{\symm{\mii}}(\itmm{\sbase}_{\symm{\mi J}})=\gd_{\mi J}^{\mii},
\end{equation}
and so $\{\itmm{\sbase}^{\symm{\mii}}\}$ may serve as the dual basis
of $\{\sbase_{\symm{\mi J}}\}$ and vice versa. 

For a basis of the tangent space to a manifold induced by the coordinates
$(x^{i})$, we will use the basis consisting of the elements
\begin{equation}
\bdry_{\mii}:=\bdry_{i_{1}}\odot\cdots\odot\bdry_{i_{l}},\qquad\bdry_{i}:=\frac{\bdry}{\bdry x^{i}},
\end{equation}
for the space of symmetric tensors, while the dual basis consists
of the elements
\begin{equation}
\itmm{\dee x}^{\mii}:=\frac{\abs{\mii}!}{\mii!}\lisuppc{\dee x}{i_{1}}{\odot}{i_{l}}.\label{eq:Basis_Elem_Jets_DEF}
\end{equation}
It is noted that in both cases, only non-decreasing indices are used.

It follows that $\odot^{l}\avs^{*}$ may be identified with $\bigp{\odot^{l}\avs}^{*}$.
In analogy with the above, for $R=R^{\mii}\sbase_{\symm{\mii}}\in\odot^{l}\avs$,
$T=T_{\mi J}\itmm{\sbase}^{\mi J}$, $T(R)=T_{\mii}R^{\mii}$, and
\begin{equation}
T_{J}R^{J}=\sum_{\mii}\frac{\abs{\mii}!}{\mii!}T_{\mii}R^{\mii}\qquad\text{(no sum).}
\end{equation}
So setting, 
\begin{equation}
\itmm R^{\mii}=\frac{\abs{\mii}!}{\mii!}R^{\mii},\qquad\tmm R^{J}=\frac{J!}{\abs J!}R^{J},\label{eq:TMM-defined}
\end{equation}
one has 
\begin{equation}
T_{J}R^{J}=T_{\mii}\itmm R^{\mii},\qquad T_{J}\tmm R^{J}=T_{\mii}R^{\mii}=T(R).\label{eq:TMM-used}
\end{equation}
The last expression may be interpreted as a statement that the components
$\tmm R^{J}$ represent the inclusion of the symmetric tensor $R$
in the space of all tensors.

We will refer to elements of $\bigp{\odot^{l-1}\avs^{*}}\otimes\avs^{*}$
as \emph{almost symmetric} tensors. It can be shown \cite{lewis_notes_nodate}
that for an almost symmetric tensor $T$,
\begin{equation}
\smtr(T)=\frac{1}{l}\sum_{\perm\in\tilde{\sperm}_{l}}\perm(T),
\end{equation}
where $\tilde{\sperm}_{l}$ contains permutations $\perm$ such that
$p(l)=\oneto l$, and $p(1)<\cdots<p(l-1)$, that is, $\perm$ switches
$l$ with another number and then orders the first $l-1$ numbers.

For a symmetric array $\vf_{I}$ and an array $T^{J}$, $\abs I=\abs J$,
we will encounter below expressions such as 
\begin{equation}
T^{\mii}w_{\mii}=T^{\symm{\mii}}w_{\mii},\quad\text{and}\quad T^{\mi Jj}w_{\mi Jj}=T^{\symm{\mi Jj}}w_{\mi Jj}.
\end{equation}
We wish to find the relation between these two expressions. Note that
each pair $\mi Jj$, determines by ordering a unique $\mii$ represented
by $(J_{1},\dots,J_{j}+1,\dots,J_{n})$. 

Consider a particular $\mii$ represented by $(\lisub I,n)$. Let
\begin{equation}
\{\mii\}=\{r\mid I_{r}>0\},\qquad c(\mii)=\mathrm{cardinality}\{\mii\},
\end{equation}
\ie, $c(\mii)$ is the number of indices that appear in $\mii$ one
time or more.  Since a pair $\mi Jj$ determines a unique $\mii$,
we may also write $c(\mi Jj)$. Let $j\in\{\mii\}$, then, 
\begin{equation}
\mi Jj,\qquad\text{with \qquad}\mi J=(I_{1},\dots,I_{j}-1,\dots,I_{n})
\end{equation}
will give $\mii$ upon rearranging. Thus, for each $\mii$ there are
exactly $c(\mii)$ pairs $\mi Jj$ that may be rearranged to give
$\mii$. It follows that
\begin{equation}
T^{\mi Jj}w_{\mi Jj}=\sum_{j\in\{\mii\}}T^{\mi Jj}w_{\mi Jj}=c(\mii)T^{\mii}w_{\mii}\qquad\text{(no sum on }\mii).
\end{equation}
Define the arrays $\tm T^{\mi Jj}$ and $\itm T^{\mii}$ by
\begin{equation}
\tm T^{\mi Jj}=\frac{1}{c(\mi Jj)}T^{\mi Jj},\qquad\text{and}\qquad\itm T^{\mii}=c(\mii)T^{\mii}.\label{eq:Def-ModT}
\end{equation}
It is concluded that
\begin{equation}
\tm T^{\mi Jj}w_{\mi Jj}=T^{\mi I}w_{\mii},\qquad\text{and}\qquad T^{\mi Jj}w_{\mi Jj}=\itm T^{\mii}w_{\mii}.\label{eq:ModT-Prop}
\end{equation}

\subsection{Jets\label{subsec:Jets}}

For a fiber bundle $\xi:\fb\to\base$, we will denote by $\xi^{k}:J^{k}(\base,\fb)\to\base$
the corresponding $k$-jet bundle of sections of $\xi$. When no ambiguity
may occur, we will often use the simpler notation $\xi^{k}:J^{k}\fb\to\base$.
One has the additional natural projections $\xi_{l}^{k}:J^{k}(\base,\fb)\to J^{l}(\base,\fb)$,
$l<k$, and in particular $\xi_{0}^{k}:J^{k}(\base,\fb)\to\fb$, \cite{saunders_geometry_1989}.

Let $\eta:\vb\to\base$ be a vector bundle and let $\eta^{k}:J^{k}\vb\to\base$
be the corresponding $k$-jet bundle. For $l<k$, the $l$-vertical
subbundle of the jet bundle is defined by
\begin{equation}
V^{l}J^{k}\vb=\kernel\eta_{l}^{k}=\{\lv\in J^{k}\vb\mid\eta_{l}^{k}(\lv)=0\}.
\end{equation}
Evidently, $V^{l}J^{k}\vb$ is a vector subbundle of the jet bundle
and we denote the inclusion by 
\begin{equation}
\incl_{V_{k}^{l}}:V^{l}J^{k}\vb\tto J^{k}\vb.\label{eq:Vertical_Subb}
\end{equation}

\subsection{Pullback of forms}

Here we use ``$\sharp$'' to indicate the pullback of a form by duality,
rather than using ``$*$''. The latter is used to indicate the pullback
of a form viewed as a section of the pullback bundle. 

\section{Continuum Mechanics and Field Theories on $k$-Jet Bundles\label{sec:Continuum-Mechanics-Jets}}

The fundamental object we consider is a fiber bundle
\begin{equation}
\xi:\fb\tto\base
\end{equation}
where $\base$ is an $\dimb$-dimensional orientable manifold and
the typical fiber is an $\dimf$-dimensional manifold. For the sake
of simplicity, it is assumed that $\base$ is compact. This fiber
bundle may have the following interpretations. 

In Lagrangian continuum mechanics, it is usually assumed that the
total space $\fb$ is trivial, \ie, 
\begin{equation}
\fb=\base\times\spc
\end{equation}
where the manifold $\base$ is interpreted as the body manifold and
the manifold $\spc$ is interpreted as the space manifold. In this
case, a section $\conf:\base\to\fb$ may be identified with a mapping
$\base\to\spc$. Such a mapping is interpreted as a configuration
of the body in the physical space.

In Eulerian continuum mechanics, $\base$ is interpreted as the physical
space and the fibers of $\fb$ are interpreted as the possible values
that fields over the space may have. For classical field theories,
$\base$ is interpreted as space-time. (See \cite{kupferman_stress_2017}
for further motivation, references and examples.) 

\subsection{The configuration space}

A configuration of the system, or a field, is a section $\conf:\base\to\fb$.
In accordance with the paradigm of analytical mechanics, a central
role is played by the configuration space of the system, an infinite
dimensional object in the case of a field theory. We are motivated
by the case of continuum mechanics where one requires traditionally
that configurations of a body in space be embeddings, and the fact
that the subset of embeddings is open in the manifold of all $C^{k}$-mappings
between two manifolds, for $k\ge1$. (See \cite{michor_manifolds_1980,m._hirsch_differential_1976}.)
Thus, for $k\ge1$, we consider the collection of $C^{k}$-sections
of $\xi$, equipped with the $C^{k}$-topology, and define the configuration
space $\csp$ to be an open subset of the manifold of sections, \ie, 
\begin{equation}
\csp\subset C^{k}(\xi):=C^{k}(\base,\fb),\quad\text{is open.}
\end{equation}
Note that by the notation $C^{k}(\base,\fb)$ we refer only to sections
of the fiber bundle rather than all mappings $\base\to\fb$.

\subsection{Generalized velocities}

Generalized velocities (virtual velocities, virtual displacements)
at the configuration $\conf$, or variations of the configuration
$\conf$, are elements of the tangent space $T_{\conf}\csp$. Since
$\csp$ is assumed to be open, 
\begin{equation}
T_{\conf}\csp=T_{\conf}C^{k}(\base,\fb).
\end{equation}
Furthermore, let 
\begin{equation}
V\xi=V\fb:=\kernel T\xi
\end{equation}
be the vertical subbundle of $T\fb$. Then (see \cite[ p. 51]{palais_foundations_1968}),
\begin{equation}
T_{\conf}\csp=T_{\conf}C^{k}(\base,\fb)\simeq C^{k}(\base,\conf^{*}V\fb),
\end{equation}
where $\conf^{*}V\fb$ denotes the pullback of the vertical bundle
onto $\base$. Thus, every generalized velocity may be identified
uniquely with a $C^{k}$-section 
\begin{equation}
\vf:\base\tto\conf^{*}V\fb,
\end{equation}
in accordance with the traditional interpretation in continuum mechanics.
(It is noted that in \cite{palais_foundations_1968}, $V\fb$ is denoted
by $TF(\fb)$ and $\conf^{*}V\fb$ is denoted by $T_{\conf}\fb$.)

Thus, 
\begin{equation}
TC^{k}(\base,\fb)\simeq C^{k}(\base,V\fb)
\end{equation}
and a section $\vf$ represents an element of $T_{\conf}\csp$ if
\begin{equation}
\tau_{\fb}\resto{V\fb}\comp\vf=\conf.
\end{equation}
In the sequel, in order to simplify the notation, we will denote the
vector bundle
\begin{equation}
\xi\comp\tau_{\fb}\resto{V\fb}:\conf^{*}V\fb\tto\base\qquad\text{by}\qquad\pi:\vb\tto\base,\label{eq:VYisW}
\end{equation}
Using local coordinates $(x^{i},y^{\ga})$ in $\fb$, an element of
$T\fb$ is represented locally in the form $\dot{x}^{i}\partial_{i}+\dot{y}^{\ga}\partial_{\ga}$.
For an element of $V\fb$, $\dot{x}^{i}=0$, and so, an element of
$\conf^{*}V\fb=W$ is represented locally in the form $\dot{y}^{\ga}\conf^{*}\partial_{\ga}$,
which we may write also simply as $\dot{y}^{\ga}\sbase_{\ga}$, $\sbase_{\ga}:=\conf^{*}\partial_{\ga}$.
The dual bases will be denoted accordingly. However, in what follows
we often identify $\conf^{*}\partial_{\ga}$ with $\bdry_{\ga}$ in
the notation.

\subsection{Generalized forces and their representations by variational hyper-stresses\label{subsec:Forces_and_Stress_Repr}}

A generalized force at a configuration $\conf$ is an element of $T_{\conf}^{*}\csp$.
As such, a force $\fc$ at $\conf$ is a continuous linear functional
in $C^{k}(\base,\conf^{*}V\fb){}^{*}$. Henceforth, we will use the
concise notation and will write 
\begin{equation}
T_{\conf}\csp=C^{k}(\conf^{*}V\fb)=C^{k}(\vb),\qquad\fc\in C^{k}(\conf^{*}V\fb)^{*}=C^{k}(\vb)^{*}.
\end{equation}

We observe that the jet extension mapping
\begin{equation}
j^{k}:C^{k}(\vb)\tto C^{0}(J^{k}\vb),\qquad\vf\lmt j^{k}\vf,
\end{equation}
is an embedding. (It is not surjective as non-holonomic sections are
not included in the image.) It follows that its dual mapping,
\begin{equation}
j^{k*}:C^{0}(J^{k}\vb)^{*}\tto C^{k}(\vb)^{*},
\end{equation}
is surjective. Using the Hahn-Banach theorem, one concludes that for
a given force $\fc$, there is some element $\stm\in C^{0}(\base,J^{k}\vb)^{*}$
such that
\begin{equation}
\fc=j^{k*}\stm\qquad\text{or}\qquad\fc(\vf)=\stm(j^{k}\vf),\label{eq:equil-eq}
\end{equation}
for every $\vf\in C^{k}(\vb)$. Equations (\ref{eq:equil-eq}) represent
the generalizations of the equation of equilibrium and the principle
of virtual work of continuum mechanics. The fact that $j^{k}$ is
not surjective, implies that the representation is not unique. This
non-uniqueness of the stress representation of a given force is a
generalization of the well-known static indeterminacy of continuum
mechanics.

Since $\stm$ is a linear functional on the space of continuous sections
of the vector bundle $\pi^{k}:J^{k}\vb\to\base$, it may be represented
by a measure $\mu$ valued in $(J^{k}\vb)^{*}$ in the form
\begin{equation}
\stm(\lv)=\int_{\base}\lv\cdot\dee\mu
\end{equation}
for any section $\lv$ of $\pi^{k}$. 

For a chart in $\base$ with coordinates $(x^{i})$ and a local basis
$\{\sbase_{\ga}\}$ for $\vb$, a section of $\vb$ is represented
locally in the form 
\begin{equation}
\vf=\vf^{\ga}\sbase_{\ga}
\end{equation}
and its jet is represented in the form
\begin{equation}
j^{k}\vf=\vf_{,\mii}^{\ga}\sbase_{\ga}^{\mii},\qquad0\le\abs I\le k,
\end{equation}
where
\begin{equation}
\sbase_{\ga}^{\mii}:=\itmm{\dee x}^{\mii}\otimes\sbase_{\ga}.
\end{equation}
Thus, a section $\lv$ of the jet bundle is represented in the form
\begin{equation}
\lv=\lv_{\mii}^{\ga}\sbase_{\ga}^{\mii}\qquad0\le\abs I\le k.
\end{equation}
Locally, $\mu$ is represented by a collection of real valued measures
$\mu_{\ga}^{\mi I}$, $\abs{\mi I}\le k$ so that
\begin{equation}
\lv\cdot\dee\mu=\lv_{\mi I}^{\ga}\cdot\dee\mu_{\ga}^{\mi I}.
\end{equation}

The representing elements $\stm\in C^{0}(J^{k}\vb)^{*}$ are the \emph{hyper-stresses}
as implied by the principal of virtual work above. Furthermore, their
role as a means for restriction of forces results from their representation
by measures. Thus, a hyper-stress $\stm$ represented by the measure
$\mu$, induces for any $n$-dimensional submanifold $\reg$, a force
$\fc_{\reg}$ by
\begin{equation}
\fc_{\reg}(\vf)=\int_{\reg}j^{1}\vf\cdot\dee\mu.
\end{equation}
Hence, while forces cannot be restricted to submanifolds, the importance
of the representation by hyper-stresses follows from the fact that
measures may be restricted to Borel sets.

\subsection{Smooth variational hyper-stresses\label{subsec:Smooth-variational-hyper-stresse}}

In what follows, we consider a smooth configuration $\conf$ and set
$\vb=\conf^{*}V\fb$. Furthermore, the focus of this work is placed
on smooth force distributions and hyper-stresses, in other words,
forces and hyperstresses represented by smooth densities. Consider
the vector bundle of linear mappings
\begin{equation}
\L{J^{k}\vb,\ext^{n}T^{*}\base}\simeq(J^{k}\vb)^{*}\otimes_{\base}\ext^{n}T^{*}\base.
\end{equation}
A \emph{smooth variational hyper-stress field} is a smooth section
$\std$ of $\L{J^{k}\vb,\ext^{n}T^{*}\base}$. Alternatively, a variational
hyper-stress field may be viewed as a smooth vector bundle morphism
\begin{equation}
\std:J^{k}\vb\tto\ext^{n}T^{*}\base,\label{eq:VarSt_as_VB-Mor}
\end{equation}
where both bundles have $\base$ as the base manifold. A variational
hyper-stress field $\std$ induces a stress $\stm$ by
\begin{equation}
\stm(\lv)=\int_{\base}\std\cdot\lv.\label{eq:SmthSt}
\end{equation}
The integral on the right hand side makes sense as $\std\cdot\lv$
is an $n$-form on $\base$. In particular, the power that a force
$\fc$ expends for the generalized velocity $\vf$ is given by
\begin{equation}
\fc(\vf)=\stm(j^{k}\vf)=\int_{\base}\std\cdot j^{k}\vf,\label{eq:SmthStPower}
\end{equation}
and the restriction of the force, induced by the stress, to the subbody
$\reg$ is given by
\begin{equation}
\fc_{\reg}(\vf)=\int_{\reg}\std\cdot j^{k}\vf.\label{eq:RestForce}
\end{equation}

To consider the local representation of the variational hyper-stress
densities, we first note that as the dual basis to $\Bigl\{\itmm{\dee x}^{\mii}\Bigr\}$
is $\{\bdry_{\mii}\}$, the dual basis corresponding to $\{\sbase_{\ga}^{\mii}\}$
is 
\begin{equation}
\left\{ \sbase_{\mii}^{\ga}:=\partial_{\mi I}\otimes\sbase^{\ga}\right\} .
\end{equation}
 It follows that a stress density is represented locally in the form
\begin{equation}
\std_{\ga}^{\mi I}\sbase_{\mii}^{\ga}\otimes\dee x,\qquad0\le\abs{\mii}\le k,\label{eq:Rep_Var_Str}
\end{equation}
and the action of the variational stress density on a section $\lv$
of the jet bundle is given by
\begin{equation}
\std\cdot\lv=\std_{\ga}^{\mii}\lv_{\mii}^{\ga}\dee x,\qquad\text{in particular,}\qquad\std\cdot j^{k}\vf=\std_{\ga}^{\mii}\vf_{,\mii}^{\ga}\dee x.\label{eq:SmthStComp}
\end{equation}

Consider next the transformation rule for the components $\std_{\ga}^{\mii}$.
Let $x^{i'}(x^{i})$ be a transformation of coordinates in a subset
of $\base$ and $\vf^{\ga'}=\lv_{\ga}^{\ga'}(x)\vf^{\ga}$ be the
coordinate transformation in $\vb$. It is noted that by definition,
the transformation rule for the components of elements of $J^{k}\vb$,
may be written in the form
\begin{equation}
\vf_{,\mii'}^{\ga'}=G_{\mii'\ga}^{\ga'\mii}\vf_{,\mii}^{\ga},\qquad\abs{\mii}\le\abs{\mii'},\label{eq:Transf_Jet_Comp-4}
\end{equation}
where $G_{\mii'\ga}^{\ga'\mii}$ contains derivatives of the aforementioned
transformations. 

Using the transformation rules (\ref{eq:Transf_Jet_Comp-4}) and denoting
the Jacobian determinant of the transformation by $\Jac$, we may
write

\begin{equation}
\begin{split}\std_{\ga}^{\mii}\vf_{,\mii}^{\ga}\dee x & =\std_{\ga'}^{\mii'}\vf_{,\mii'}^{\ga'}\dee x',\qquad\abs{\mii}\le\abs{\mii'},\\
 & =\std_{\ga'}^{\mii'}G_{\mii'\ga}^{\ga'\mii}\vf_{,\mii}^{\ga}\dee x\Jac.
\end{split}
\end{equation}
The independence of the point values of the derivatives $\vf_{,\mii}^{\ga}$
imply that the transformation rule for the components $\std_{\ga}^{\mii}$
is
\begin{equation}
\std_{\ga}^{\mii}=\Jac\std_{\ga'}^{\mii'}G_{\mii'\ga}^{\ga'\mii},\qquad\abs{\mii}\le\abs{\mii'}\le k.
\end{equation}
We conclude that the transformation of the components of the variational
stress densities of order $l$ involves all components of equal or
higher order $l\le r\le k$. In particular, the statement that all
the components of order $r>l$ vanish, is invariant. This, of course,
gives meaning to a statement that a material is of order $k$ and
not any higher order.

\section{\label{sec:Hyper-Stresses,-the-Cauchy}Hyper-Stresses, the Cauchy
Approach}

\subsection{Body hyper-forces\label{subsec:Body-hyper-forces}}

A \emph{body hyper-force density} is an element of $\L{J^{k-1}\vb,\ext^{n}T^{*}\base}$.
A \emph{body force field} is a section of $\L{J^{k-1}\vb,\ext^{n}T^{*}\base}$.
Given a body force field $\bfc$ and a subbody $\reg\subset\base$,
the total power of the body hyper-force is
\begin{equation}
\fc_{\bfc}(\lv)=\int_{\reg}\bfc\cdot\lv,
\end{equation}
for any section $\lv$ of $J^{k-1}\vb$. In particular, for a section
$\vf$ of $\vb$, 
\begin{equation}
\fc_{\bfc}(j^{k-1}\vf)=\int_{\reg}\bfc\cdot j^{k-1}\vf.
\end{equation}

Using the notation introduced above, a body hyper-force field is represented
locally in the form
\begin{equation}
\bfc=\bfc_{\ga}^{\mi J}\sbase_{\mi J}^{\ga}\otimes\dee x,\qquad0\le\abs{\mi J}\le k-1.
\end{equation}
The action of a body force is 
\begin{equation}
\bfc\cdot\lv=\bfc_{\ga}^{\mi J}\lv_{\mi J}^{\ga}\dee x,\quad\text{and in particular,}\quad\bfc\cdot j^{k}\vf=\bfc_{\ga}^{\mi J}\vf_{,\mi J}^{\ga}\dee x,\quad0\le\abs{\mi J}\le k-1.
\end{equation}

Consider next the transformation rule for the components $\bfc_{\ga}^{\mi J}$.
Using (\ref{eq:Transf_Jet_Comp-4}), one has
\begin{equation}
\begin{split}\bfc_{\ga}^{\mi J}\vf_{,\mi J}^{\ga}\dee x & =\bfc_{\ga'}^{\mi J'}\vf_{,\mi J'}^{\ga'}\dee x',\qquad\abs{\mi J}\le\abs{\mi J'}\le k-1,\\
 & =\bfc_{\ga'}^{\mi J'}G_{\mi J'\ga}^{\ga'\mi J}\vf_{,\mi J}^{\ga}\dee x\Jac.
\end{split}
\end{equation}
The independence of the point values of the derivatives $\vf_{,\mii}^{\ga}$
imply that the transformation rule for the components $\std_{\ga}^{\mii}$
is
\begin{equation}
\bfc_{\ga}^{\mi J}=\Jac\bfc_{\ga'}^{\mi J'}G_{\mi J'\ga}^{\ga'\mi J},\qquad\abs{\mi J}\le\abs{\mi J'}\le k-1.
\end{equation}
Similarly to the variational stress, we conclude that the transformation
of the components of the body hyper-force densities of order $l$
involves all components of equal or higher order $l\le r\le k$. In
particular, the statement that all the components of order $r>l$
vanish, is invariant. .

\subsection{Hyper-traction fields\label{subsec:Surface-hyper-forces}}

Let $\reg\subset\base$ be a subbody. A \emph{surface hyper-force
density} or a \emph{hyper-traction }on $\bdry\reg$ is an element
of $\L{J^{k-1}\vb,\ext^{n-1}T^{*}\bdry\reg}$. Here, and in what follows,
we view $\L{J^{k-1}\vb,\ext^{n-1}T^{*}\bdry\reg}$ as a vector bundle
over $\bdry\reg$ and in our notation we omit the indication that
$J^{k-1}\vb$ is restricted to $\bdry\reg$. A \emph{hyper-traction
field} on $\bdry\reg$ is a section of $\L{J^{k-1}\vb,\ext^{n-1}T^{*}\bdry\reg}$.
Given a subbody $\reg\subset\base$ and a hyper-traction field $\sfc_{\reg}$
and, the total power of the hyper-traction is the action of the functional
\begin{equation}
\fc_{\sfc}(\lv)=\int_{\bdry\reg}\sfc_{\reg}\cdot\lv,
\end{equation}
for any section $\lv$ of $(J^{k-1}\vb)\resto{\bdry\reg}$. In particular,
for a section $\vf$ of $\vb$, 
\begin{equation}
\fc_{\sfc}(j^{k-1}\vf)=\int_{\bdry\reg}\sfc_{\reg}\cdot j^{k-1}\vf.
\end{equation}
It is noted that the jet of $\vf$ is taken relative to $\base$ and
not $\bdry\reg$.

Let $(z^{p})$, $p=1,\dots,n-1$ be local coordinates in $\bdry\reg$,
so that 
\begin{equation}
\dee z=\lisuppc{\dee z}1{\wedge}{n-1}
\end{equation}
is a local volume form. Then, in an adapted coordinate system, a hyper-traction
field is represented locally in the form
\begin{equation}
\sfc_{\reg}=\sfc_{\reg\ga}^{\mi J}\sbase_{\mi J}^{\ga}\otimes\dee z,\qquad0\le\abs{\mi J}\le k-1.
\end{equation}
The action of a hyper-traction is 
\begin{equation}
\sfc_{\reg}\cdot\lv=\sfc_{\reg\ga}^{\mi J}\lv_{\mi J}^{\ga}\dee z,\quad\text{and in particular,}\quad\sfc_{\reg}\cdot j^{k}\vf=\sfc_{\reg\ga}^{\mi J}\vf_{,\mi J}^{\ga}\dee z,\quad0\le\abs{\mi J}\le k-1.
\end{equation}

\subsection{Smooth force functionals and hyper-force systems}

We will say that a force functional on a subbody $\reg$ is smooth
if it is induced by a body hyper-force field $\bfc$ and a hyper-traction
field $\sfc_{\reg}$ in the form
\begin{equation}
\fc_{\reg}(\vf)=\int_{\reg}\bfc\cdot j^{k-1}\vf+\int_{\bdry\reg}\sfc_{\reg}\cdot j^{k-1}\vf\label{eq:Hyper-force_systems}
\end{equation}
for every virtual displacement $\vf:\base\to\vb$. Henceforth, we
will be interested only in smooth force functionals.

Let $\{\sfc_{\reg}\}$ be a collection of sections $\sfc_{\reg}$
of $L\bigp{\avb\resto{\bdry\reg},\ext^{n-1}T^{*}\bdry\reg}$ for all
subbodies $\reg\subset\base.$ We will refer to $\{\sfc_{\reg}\}$
as a \emph{system of hyper-tractions}. The collection $(\bfc,\{\sfc_{R}\})$
will be referred to as a \emph{smooth system of hyper-forces}, where
each force functional in $\{\fc_{\reg}$\} is represented as in the
equation above.

\subsection{Traction hyper-stresses\label{subsec:Traction-hyper-stresses-1}}

A \emph{traction hyper-stress} is an element 
\begin{equation}
\st_{0}\in\L{J^{k-1}\vb,\ext^{n-1}T^{*}\base}
\end{equation}
 and a \emph{traction hyper-stress field} is a smooth section $\tst$
of $\L{J^{k-1}\vb,\ext^{n-1}T^{*}\base}$. 

Let $\reg\subset\base$ be an $n$-dimensional submanifold with boundary
$\bdry\reg$. The inclusion
\begin{equation}
\incl_{\bdry\reg}:\bdry\reg\tto\base,
\end{equation}
induces the following diagram where $(\incl_{\bdry\reg})^{*}T\base\simeq T\base\resto{\bdry\reg}$
is the pullback of the tangent bundle and $\delta\incl_{\bdry\reg}$
is the induced vector bundle morphism over the boundary that makes
the diagram commutative\textemdash an inclusion of tangent vectors.

\[
\begin{xy}
(0,0)*+{\partial\reg} = "Q";
(40,0)*+{\base} = "S";
(40,25)*+{T\base} ="TS";
(-10,25)*+{T\partial\reg} = "TQ";
(15,25)*+{(\incl_{\bdry\reg})^*T\base} ="jTS";
{\ar@{->}_{\incl_{\bdry\reg}} "Q"; "S"};
{\ar@{->}_{\tau_{\bdry\reg}} "TQ"; "Q"};
{\ar@{->}^{\incl_{\bdry\reg}^*\tau_\base} "jTS"; "Q"};
{\ar@{->}^{\tau_\base} "TS"; "S"};
{\ar@{->}@/^{2pc}/^{T\incl_{\bdry\reg}} "TQ"; "TS"};
{\ar@{->}^{\delta\incl_{\bdry\reg}} "TQ"; "jTS"};
{\ar@{->}^{{\tau_\base^*}\incl_{\bdry\reg}} "jTS"; "TS"};
\end{xy}
\]

The dual vector bundle morphism, 
\begin{equation}
\rho_{\bdry\reg}=(\delta\incl_{\bdry\reg})^{*}:\left(\ext^{n-1}T^{*}\base\right)\resto{\bdry\reg}\simeq\ext^{n-1}\bigp{\incl_{\bdry\reg}^{*}T\base}^{*}\tto\ext^{n-1}T^{*}\bdry\reg,
\end{equation}
restricts alternating tensors to vectors tangent to the boundary and
it induces the pullback of forms.

Let $z\in\bdry\reg$, and $\st_{0}\in\L{J^{k-1}\vb,\ext^{n-1}T^{*}\base}_{z}$.
Then, 
\begin{equation}
\sfc_{0}=\rho_{\bdry\reg}\comp\tst_{0}\in\L{J^{k-1}\vb,\ext^{n-1}T^{*}\bdry\reg}_{z}\label{eq:Gen_Cauch}
\end{equation}
is a hyper-traction induced at $z$ by $\tst_{0}$. Similarly, a traction
hyper-stress field $\tst$ induces a hyper-traction field
\begin{equation}
\sfc=\rho_{\bdry\reg}\comp\tst=\incl_{\bdry\reg}^{\sharp}\tst.\label{eq:Gen_Cauchy-1}
\end{equation}
Thus, Equation (\ref{eq:Gen_Cauchy-1}) is a generalization of the
traditional Cauchy formula.

Let $v\contr\go$ denote the contraction (inner product) of the form
$\go$ with the tangent vector $v$. As $\{\bdry_{i}\contr\dee x\}$
may serve as a basis for $\ext^{n-1}T^{*}\base$, in analogy with
(\ref{eq:Rep_Var_Str}), a traction hyper-stress field may be represented
in the form
\begin{equation}
\st=\st_{\ga}^{\mi Ji}\dbase_{\mi J}^{\ga}\otimes(\partial_{i}\contr\dee x),\qquad\abs{\mi J}\le k-1.\label{eq:Rep_Tract_Str-1}
\end{equation}
For a section $\alv$ of $J^{k-1}\vb$, one has therefore,
\begin{equation}
\st\cdot\alv=\st_{\ga}^{\mi Ji}\alv_{\mi J}^{\ga}(\partial_{i}\contr\dee x),
\end{equation}
and in particular,
\begin{equation}
\st\cdot j^{k-1}\vf=\st_{\ga}^{\mi Ji}\vf_{,\mi J}^{\ga}(\partial_{i}\contr\dee x),\qquad\abs{\mi J}\le k-1.
\end{equation}

Consider next the transformation rule for the components $\st_{\ga}^{\mi Ji}$.
Using the transformations (\ref{eq:Transf_Jet_Comp-4}) and the invariance
of the action, we have
\begin{equation}
\begin{split}\tst_{\ga}^{\mii i}\vf_{,\mii}^{\ga}\bdry_{i}\contr\dee x & =\tst_{\ga'}^{\mii'i'}\vf_{,\mii'}^{\ga'}\bdry_{i'}\contr\dee x',\qquad\abs{\mii}\le\abs{\mii'},\\
 & =\tst_{\ga'}^{\mii'i'}G_{\mii'\ga}^{\ga'\mii}\vf_{,\mii}^{\ga}x_{,i'}^{i}\bdry_{i}\contr\dee x\Jac.
\end{split}
\end{equation}
Thus,
\begin{equation}
\tst_{\ga}^{\mii i}=\Jac\tst_{\ga'}^{\mii'i'}G_{\mii'\ga}^{\ga'\mii}x_{,i'}^{i},\qquad\abs{\mii}\le\abs{\mii'}.
\end{equation}
The last equation is a generalization of the classical relation between
the Cauchy stress and the first Piola-Kirchhoff stress, which as presented
here, is just a transformation of coordinates of the right geometric
object. Once again, the transformation of the components of the traction
hyper-stress densities of order $l$ involves all components of equal
or higher order $l\le r\le k-1$. 

Finally, given a traction hyper-stress, $\tst$, and a body hyper-force,
$\bfc$, one may write the force functional on each subbody as
\begin{equation}
\begin{split}\fc_{\reg}(\vf) & =\int_{\reg}\bfc\cdot j^{k-1}\vf+\int_{\bdry\reg}\tst\cdot j^{k-1}\vf,\\
 & =\int_{\reg}\bfc\cdot j^{k-1}\vf+\int_{\reg}\dee(\tst\cdot j^{k-1}\vf).
\end{split}
\label{eq:force_and_TracSt}
\end{equation}

\subsection{Cauchy's theorem and traction hyper-stresses}

We review here Cauchy's theorem on manifolds as in \cite{segev_cauchys_1999,segev_notes_2013}.
The Cauchy theorem asserts that under certain boundedness and locality
conditions, a system of forces on all subbodies induce a unique traction
stress field. The theorem is formulated for a general vector bundle
$\avb$ and the terminology used is for the case of standard continuum
mechanics, $k=1$. We will use it later for the case $\avb=J^{k-1}\vb$. 

It is assumed in the sequel that the collection of subbodies of $\base$
includes $n$-dimensional chains and in particular simplices. Furthermore,
it is assumed that a particular orientation has been chosen for the
manifold $\base$.
\begin{defn}
We will say that a system $\{\sfc_{\reg}\}$ of tractions is \emph{consistent}
if the following conditions are satisfied.
\begin{enumerate}
\item \textbf{\textit{\emph{Boundedness.}}}\textit{\emph{ There is a section
$\xi$ of $\L{J^{1}U,\ext^{n}T^{*}\base}$ such that for each $\reg$,}}\emph{
\begin{equation}
\abs{\int_{\bdry\reg}\sfc_{\reg}(\avf\resto{\bdry\reg})}\les\int_{\reg}\abs{\xi(j^{1}\avf)},\label{eq:BoundednessStresses}
\end{equation}
for every smooth section $\avf$ of $\avb$. }
\item \textbf{\textit{\emph{Cauchy's postulate of locality.}}}\textit{\emph{
Let $v_{1},\dots,v_{n-1}\in T_{x}\base$ be a collection of tangent
vector at $x\in\base$. Let $\reg$ be any subbody such that $v_{1},\dots,v_{n-1}\in T_{x}\bdry\reg$,
and the collection of vectors is positively oriented relative to the
orientation of $\bdry\reg$. Then, $\sfc_{\reg}(\avf)(v_{1},\dots,v_{n-1})$
is independent of $\reg$. Thus, there is an alternating multilinear
mapping 
\begin{equation}
\vcmap_{x}:(T_{x}\base)^{n-1}\tto U_{x}^{*}
\end{equation}
such that
\begin{equation}
\sfc_{\reg}(\avf)(v_{1},\dots,v_{n-1})=\vcmap_{x}(v_{1},\dots,v_{n-1})(\avf)\fall u\in\avb_{x}.
\end{equation}
}}
\item \textbf{\textit{\emph{Regularity.}}}\textit{\emph{ The mapping $x\mapsto\vcmap_{x}$
is smooth.}}
\end{enumerate}
A smooth system of forces $(\bfc,\{\sfc_{\reg}\})$ is said to be
consistent if $\{\sfc_{\reg}\}$ is a consistent system of tractions.
\end{defn}

\begin{thm}
Let $\{\sfc_{\reg}\}$ be a consistent force system, then, there is
a unique traction stress field $\tst$ such that for each subbody
$\reg$, the surface force is given by the Cauchy formula
\begin{equation}
\sfc_{\reg}=\rho_{\bdry\reg}\comp\tst.\label{eq:Cauchy_Thm}
\end{equation}
\end{thm}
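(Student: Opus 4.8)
The plan is to read the traction stress field off the pointwise object $\vcmap_{x}$ supplied by Cauchy's postulate of locality, to use the boundedness condition to guarantee that these pointwise objects are genuine $\avb_{x}^{*}$-valued $(n-1)$-covectors, and to use the regularity condition for smoothness. Within a single oriented tangent hyperplane at $x$, the number $\sfc_{\reg}(\avf)(v_{1},\dots,v_{n-1})$ is already alternating and multilinear in $(v_{1},\dots,v_{n-1})$ for the trivial reason that $\sfc_{\reg}(\avf)$ is an $(n-1)$-form on $\bdry\reg$; Cauchy's postulate makes this value independent of the subbody $\reg$ realizing the hyperplane, so $\vcmap_{x}(v_{1},\dots,v_{n-1})(\avf)$ is well defined on each hyperplane. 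Identifying an alternating $(n-1)$-multilinear map $(T_{x}\base)^{n-1}\tto\avb_{x}^{*}$ with an element of $\L{\avb,\ext^{n-1}T^{*}\base}_{x}$, I would then set $\tst_{x}:=\vcmap_{x}$.

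The main obstacle is to promote this hyperplane-by-hyperplane data to a single alternating $(n-1)$-covector, \ie\ to show that $\vcmap_{x}(v_{1},\dots,v_{n-1})$ depends \emph{linearly} on the oriented area element $v_{1}\wedge\cdots\wedge v_{n-1}$ as the hyperplane varies. This is the geometric incarnation of Cauchy's tetrahedron argument, and it is exactly where the boundedness condition (\ref{eq:BoundednessStresses}) is needed. I would localize in a chart and take a small $n$-simplex $\reg$ based at $x$, so that $\bdry\reg$ is a closed $(n-1)$-chain whose faces carry prescribed oriented hyperplanes and whose signed area multivectors sum to zero. Rescaling $\reg$ by a factor $\gl\to0$, the surface integral $\int_{\bdry\reg}\sfc_{\reg}(\avf)$ is of order $\gl^{n-1}$ while the dominating bound $\int_{\reg}\abs{\xi(j^{1}\avf)}$ is of order $\gl^{n}$; dividing by $\gl^{n-1}$ and letting $\gl\to0$ forces the leading-order sum over the faces to vanish. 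Choosing $\avf$ with prescribed value at $x$ and the target face carrying an arbitrary area element, the closure relation among the faces then expresses $\vcmap_{x}(\text{target})(\avf)$ as the same fixed linear combination of its values on a coordinate basis of $(n-1)$-vectors\textemdash which is precisely linearity. Obtaining the scaling estimates uniformly as the simplex degenerates is the delicate point.

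With $\tst_{x}=\vcmap_{x}$ now a well-defined element of $\L{\avb,\ext^{n-1}T^{*}\base}_{x}$ at every $x$, the regularity condition states that $x\mapsto\vcmap_{x}$ is smooth, so $\tst$ is a smooth section of $\L{\avb,\ext^{n-1}T^{*}\base}$, that is, a traction stress field.

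Finally I would check the Cauchy formula and uniqueness. For a subbody $\reg$, a point $z\in\bdry\reg$, and positively oriented $v_{1},\dots,v_{n-1}\in T_{z}\bdry\reg$, unwinding $\rho_{\bdry\reg}$ as the restriction of alternating forms to boundary-tangent vectors gives $(\rho_{\bdry\reg}\comp\tst)(\avf)(v_{1},\dots,v_{n-1})=\tst_{z}(\avf)(v_{1},\dots,v_{n-1})=\vcmap_{z}(v_{1},\dots,v_{n-1})(\avf)=\sfc_{\reg}(\avf)(v_{1},\dots,v_{n-1})$, which is (\ref{eq:Cauchy_Thm}). For uniqueness, if $\tst'$ satisfies the same formula then $\rho_{\bdry\reg}\comp(\tst-\tst')=0$ for every $\reg$; since every oriented hyperplane at every $z$ is tangent to the boundary of some admissible subbody and, in dimension $n$, the resulting simple $(n-1)$-vectors exhaust $\ext^{n-1}T_{z}\base$, the form $(\tst-\tst')_{z}$ annihilates a spanning set and hence vanishes, giving $\tst=\tst'$.
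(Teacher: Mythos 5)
The paper itself contains no proof of this theorem: it is quoted from the cited references \cite{segev_cauchys_1999,segev_notes_2013} and then used as a black box by setting $\avb=J^{k-1}\vb$. Your sketch is the standard Cauchy simplex argument and is, in substance, the proof given in those sources: locality yields a hyperplane-by-hyperplane object $\vcmap_{x}$, the boundedness condition combined with the $\gl^{n-1}$ versus $\gl^{n}$ scaling on shrinking simplices (whose face area multivectors sum to zero) forces linearity in the $(n-1)$-multivector, and regularity gives smoothness of the resulting section. Two remarks. First, as the consistency definition is phrased in this paper, the existence of the alternating multilinear mapping $\vcmap_{x}$ is already folded into the locality postulate (the ``Thus, there is an alternating multilinear mapping\dots'' clause), so relative to that formulation the theorem collapses to your final two paragraphs: set $\tst_{x}=\vcmap_{x}$, unwind $\rho_{\bdry\reg}$ to get the Cauchy formula, and get uniqueness from the fact that in an $n$-dimensional space every $(n-1)$-vector is simple, so the boundary hyperplanes of admissible subbodies supply a spanning set. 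The tetrahedron step you correctly single out as the main obstacle is the genuine analytic content hidden in that ``Thus,'' and your identification of the uniform-degeneration estimates as the delicate point is accurate. Second, a minor caution: the locality postulate fixes $\vcmap_{x}(v_{1},\dots,v_{n-1})$ only for positively oriented frames in a given hyperplane, so homogeneity under orientation-reversing rescalings within a hyperplane also has to be extracted from the closure relation rather than assumed; this is routine but worth stating if you write the argument out in full.
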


\selectlanguage{english}%
\foreignlanguage{american}{}

\selectlanguage{american}%
Cauchy's theorem, in the generalized setting outlined above, applies
readily to traction hyper-stresses by setting $\avb=J^{k-1}\vb$.
Assuming that the system of hyper-tractions satisfies the conditions
above, it follows from Cauchy's theorem that there is a unique traction
hyper-stress that induces it using (\ref{eq:Cauchy_Thm}). 

\section{Non-Holonomic Variational Stresses\label{sec:Non-Holonomic-Variational-Stress}}

Unlike the simple case, $k=1$, variational hyper-stresses do not
determine uniquely traction hyper-stresses for higher order theories.
Another mathematical object is needed\textemdash the non-holonomic
stress.

\subsection{The exterior jet and non-holonomic variational stresses}

\begin{proposition}

There is a natural linear, first order-differential operator, the
exterior jet $\coj$, taking sections of $\L{J^{k-1}\vb,\ext^{n-1}T^{*}\base}$
into sections of $\L{J^{1}(J^{k-1}\vb),\ext^{n}T^{*}\base}$, defined
by the condition
\begin{equation}
\coj\tst\cdot j^{1}\lv=\dee(\tst\cdot\lv)\label{eq:Def-Ext_Jet}
\end{equation}
for every section $\lv$ of $J^{k-1}\vb$.

\end{proposition}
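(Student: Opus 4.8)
The plan is to define the exterior jet operator $\coj$ locally, verify that the defining condition (\ref{eq:Def-Ext_Jet}) holds in coordinates, and then argue that the condition characterizes $\coj\tst$ uniquely and invariantly so that the local definitions patch into a global section. Let me think about how to carry this out.

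First, I need to understand the defining equation. We have a traction hyper-stress field $\tst$, a section of $\L{J^{k-1}\vb,\ext^{n-1}T^{*}\base}$. For any section $\lv$ of $J^{k-1}\vb$, the pairing $\tst\cdot\lv$ is an $(n-1)$-form on $\base$. Taking exterior derivative gives an $n$-form $\dee(\tst\cdot\lv)$. The claim is that there's an operator $\coj$ such that this equals $\coj\tst\cdot j^1\lv$, where $j^1\lv$ is the first jet of $\lv$ (viewed as a section of the fiber bundle $J^{k-1}\vb$). The key point is that $\dee(\tst\cdot\lv)$ at a point depends on $\lv$ only through its value and first derivatives — i.e., through $j^1\lv$ — which is what makes $\coj\tst$ a well-defined section of $\L{J^1(J^{k-1}\vb),\ext^n T^*\base}$.

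So the plan has three parts. **First**, I would work in local coordinates. Using the representation of $\tst$ from (\ref{eq:Rep_Tract_Str-1}), so $\tst = \tst_\ga^{\mi Ji}\dbase_{\mi J}^\ga\otimes(\bdry_i\contr\dee x)$, and writing $\lv = \lv_{\mi J}^\ga\sbase_\ga^{\mi J}$, I compute $\tst\cdot\lv = \tst_\ga^{\mi Ji}\lv_{\mi J}^\ga(\bdry_i\contr\dee x)$ and then apply the Leibniz rule for $\dee$. The exterior derivative of $f(\bdry_i\contr\dee x)$ is $(\bdry_i f)\dee x$ (since $\dee(\bdry_i\contr\dee x)=0$), which produces two terms by the product rule: one with $\tst_{\ga,i}^{\mi Ji}\lv_{\mi J}^\ga$ and one with $\tst_\ga^{\mi Ji}\lv_{\mi J,i}^\ga$. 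This makes the first-order dependence on $\lv$ manifest and lets me read off the components of $\coj\tst$ against the basis of $J^1(J^{k-1}\vb)$.

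\textbf{Second}, I would check that the defining condition determines $\coj\tst$ uniquely as a bundle morphism on $J^1(J^{k-1}\vb)$. The crucial sub-point is that the section $j^1\lv$, as $\lv$ ranges over all local sections of $J^{k-1}\vb$, attains at each point an arbitrary value in the fiber of $J^1(J^{k-1}\vb)$ — that is, value and first derivatives of $\lv$ at a point are independent and can be prescribed freely. Since $\dee(\tst\cdot\lv)$ depends pointwise only on these data (linearly), equation (\ref{eq:Def-Ext_Jet}) fixes the value of $\coj\tst$ on every element of the fiber, hence determines it. \textbf{Third}, for invariance: because the right-hand side $\dee(\tst\cdot\lv)$ is manifestly intrinsic (exterior derivative of an invariantly defined form), the local expressions automatically agree on overlaps, so $\coj\tst$ is a genuine global section; linearity in $\tst$ and the first-order nature of $\coj$ are immediate from the coordinate formula.

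The main obstacle is the \textbf{second} step — establishing that $\dee(\tst\cdot\lv)$ genuinely factors through $j^1\lv$ and not through higher derivatives of $\lv$. One must confirm that the exterior derivative introduces derivatives of $\lv$ of order at most one, which is clear from the Leibniz computation, but one must also verify that the resulting assignment $j^1\lv\mapsto \coj\tst\cdot j^1\lv$ is well-defined on the jet bundle, i.e.\ that it respects the fibered structure of $J^1(J^{k-1}\vb)$ rather than merely defining something on sections. The clean way to finish is to exhibit $\coj\tst$ explicitly as the bundle morphism whose action on the canonical basis of $J^1(J^{k-1}\vb)$ is given by the coefficients extracted in step one, and then to note that testing against holonomic sections $j^1\lv$ recovers (\ref{eq:Def-Ext_Jet}) while the invariance of the right-hand side guarantees the construction is coordinate-independent.
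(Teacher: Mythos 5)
Your proposal is correct and follows essentially the same route as the paper's proof: a local Leibniz-rule computation giving $\dee(\tst\cdot\lv)=(\tst_{\ga,i}^{\mi Ji}\lv_{\mi J}^{\ga}+\tst_{\ga}^{\mi Ji}\lv_{\mi J,i}^{\ga})\,\dee x$, followed by the observation that this depends pointwise and linearly only on $j^{1}\lv$, which determines $\coj\tst$ uniquely, with the first-order linear dependence on the jet of $\tst$ read off from the same formula. Your added remarks on fiberwise surjectivity of jet values and coordinate-independence merely make explicit what the paper leaves implicit.
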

\begin{proof}
Let $\tst$ be a traction hyper-stress field represented as in Equation
(\ref{eq:Rep_Tract_Str-1}) and let a generic section $\lv$ of $J^{k-1}\vb$
be given locally by $\lv=\lv_{\mi J}^{\ga}\sbase_{\ga}^{\mi J}$.
One has
\begin{equation}
\dee(\st\cdot\lv)=(\st_{\ga,i}^{\mi Ji}\lv_{\mi J}^{\ga}+\st_{\ga}^{\mi Ji}\lv_{\mi J,i}^{\ga})\dee x,\qquad\abs{\mi J}\le k-1.
\end{equation}
It follows that $\dee(\tst\cdot\lv)$ is an $n$-form that depends
pointwise on the values of the representatives of $\lv$ and their
first derivatives, \ie,  it depends on $j^{1}\lv$. Hence, there
is a unique section $\coj\tst$ of $\L{J^{1}(J^{k-1}\vb),\ext^{n}T^{*}\base}$,
for which condition (\ref{eq:Def-Ext_Jet}) holds. The value of the
section $\coj\tst$ at a point $x\in\base$ depends linearly on the
first jet at $x$ of the section $\tst:\base\to\L{J^{k-1}\vb,\ext^{n-1}T^{*}\base}$.
Hence, $\coj$ is a first order linear differential operator.
\end{proof}
\begin{defn}
An element of $\L{J^{1}(J^{k-1}\vb),\ext^{n}T^{*}\base}$ will be
referred to as a \emph{non-holonomic (variational hyper-) stress.}
A section of $\L{J^{1}(J^{k-1}\vb),\ext^{n}T^{*}\base}$ is a \emph{non-holonomic
stress field.}
\end{defn}

Thus, $\coj\tst$ is a non-holonomic stress field. As noted, the terminology
follows from the fact that a section of $J^{k-1}\vb$ need not be
the $k$-th jet of a section of $\vb$.

\subsection{Local representation of non-holonomic stresses and exterior jets}

An element $\alv\in J^{1}(J^{k-1}\vb)$ is represented locally in
the form
\begin{equation}
\alv=\alv_{\mi J}^{\ga}\sbase_{\ga}^{\mi J}+\alv_{\mi Jj}^{\ga}\sbase_{\ga}^{\mi J}\otimes\dee x^{j},
\end{equation}
where it is noted that while $\alv_{\mi J}^{\ga}$ and $\alv_{\mi Jj}^{\ga}$
are symmetric under permutations of $\mi J$, $\alv_{\mi Jj}^{\ga}$
need not be symmetric under permutations of $\mi Jj$. For a section
$\lv$ of $J^{k-1}\vb$, represented locally in the form $\lv=\lv_{\mi J}^{\ga}\sbase_{\ga}^{\mi J}$,
one has locally
\begin{equation}
j^{1}\lv=\lv_{\mi J}^{\ga}\sbase_{\ga}^{\mi J}+\lv_{\mi J,j}^{\ga}\sbase_{\ga}^{\mi J}\otimes\dee x^{j},
\end{equation}
and for a section $\vf$ of $\vb$,
\begin{equation}
j^{1}(j^{k-1}\vf)=\vf_{,\mi J}^{\ga}\sbase_{\ga}^{\mi J}+w_{,\mi Jj}^{\ga}\sbase_{\ga}^{\mi J}\otimes\dee x^{j}.
\end{equation}
Evidently, $j^{1}(j^{k-1}\vf)$ represents a holonomic element of
$J^{1}(J^{k-1}\vb)$ for which complete symmetry holds.

It follows that a non-holonomic stress should be locally of the form
\begin{equation}
\nhs=\bigp{\nhsa_{\ga}^{\mi J}\sbase_{\mi J}^{\ga}+\nhsb_{\ga}^{\mi Jj}\sbase_{\mi J}^{\ga}\otimes\bdry_{j}}\otimes\dee x.
\end{equation}
Its action is given by
\begin{equation}
\nhs(\alv)=\bigp{\nhsa_{\ga}^{\mi J}\alv_{\mi J}^{\ga}+\nhsb_{\ga}^{\mi Jj}\alv_{\mi Jj}^{\ga}}\dee x,\qquad\nhs(j^{1}\lv)=\bigp{\nhsa_{\ga}^{\mi J}\lv_{\mi J}^{\ga}+\nhsb_{\ga}^{\mi Jj}\lv_{\mi J,j}^{\ga}}\dee x,
\end{equation}
and
\begin{equation}
\nhs(j^{1}(j^{k-1}\vf))=\bigp{\nhsa_{\ga}^{\mi J}\vf_{,\mi J}^{\ga}+\nhsb_{\ga}^{\mi Jj}\vf_{,\mi Jj}^{\ga}}\dee x.
\end{equation}

Given a traction hyper-stress field $\tst$, the exterior jet field
should be of the form
\begin{equation}
\coj\tst=\bigp{(\coj\tst)_{\ga}^{\mi J}\sbase_{\mi J}^{\ga}+(\underline{\coj\tst})_{\ga}^{\mi Jj}\sbase_{\mi J}^{\ga}\otimes\bdry_{j}}\otimes\dee x.
\end{equation}
The definition of the exterior jet implies that for any section $\lv$
of $J^{k-1}\vb$, $\coj\tst\cdot j^{1}\lv=\dee(\tst\cdot\lv)$, and
so
\begin{equation}
(\coj\tst)_{\ga}^{\mi J}\lv_{\mi J}^{\ga}+(\underline{\coj\tst})_{\ga}^{\mi Jj}\lv_{\mi J,j}^{\ga}=\tst_{\ga,j}^{\mi Jj}\lv_{,\mi J}^{\ga}+\tst_{\ga}^{\mi Jj}\lv_{\mi J,j}^{\ga}.
\end{equation}
At any point $x$, the values of the components $\lv_{\mi J}^{\ga}$
and $\lv_{\mi J,j}^{\ga}$ are independent. Thus,
\begin{equation}
(\coj\tst)_{\ga}^{\mi J}=\tst_{\ga,j}^{\mi Jj},\qquad(\underline{\coj\tst})_{\ga}^{\mi Jj}=\tst_{\ga}^{\mi Jj},
\end{equation}
\begin{equation}
\coj\tst=\bigp{\tst_{\ga,j}^{\mi Jj}\sbase_{\mi J}^{\ga}+\tst_{\ga}^{\mi Jj}\sbase_{\mi J}^{\ga}\otimes\bdry_{j}}\otimes\dee x.\label{eq:Exterior_Jet_J11}
\end{equation}

\subsection{The non-holonomic stress induced by a consistent hyper-force system}

We now consider systems consisting of body hyper-forces and hyper-tractions
as in (\ref{eq:Hyper-force_systems}). We will say that the hyper-force
system is consistent if it satisfies the conditions of consistency
of the generalized Cauchy theorem for the case $\avb=J^{k-1}\vb$. 

\begin{proposition}\label{prop:Induced_Non_Holonomic}

A smooth consistent hyper-force system\foreignlanguage{english}{ $(\bfc,\{\sfc_{\reg}\})$}
induces a unique non-holonomic stress field $\nhs$ which represents
it by
\begin{equation}
\fc_{\reg}(\vf)=\int_{\reg}\nhs\cdot j^{1}(j^{k-1}\vf).\label{eq:Rep_Syst_NonHolSt}
\end{equation}

\end{proposition}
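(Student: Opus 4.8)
The plan is to reduce everything to the generalized Cauchy theorem applied to the vector bundle $\avb=J^{k-1}\vb$, and then to package the resulting traction hyper-stress together with the given body hyper-force into a single non-holonomic stress by means of the exterior jet. First, since $\{\sfc_{\reg}\}$ is a consistent system of hyper-tractions, the Cauchy theorem (in the form (\ref{eq:Cauchy_Thm})) supplies a \emph{unique} traction hyper-stress field $\tst$, a smooth section of $\L{J^{k-1}\vb,\ext^{n-1}T^{*}\base}$, satisfying $\sfc_{\reg}=\rho_{\bdry\reg}\comp\tst=\incl_{\bdry\reg}\sh\tst$ for every subbody simultaneously. This is the only place the consistency hypothesis is used, and it is exactly what converts the prescribed family of tractions into one globally defined field.

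Next I would rewrite the force functional. By the Cauchy formula together with Stokes' theorem, the surface term turns into a volume integral, so that $\fc_{\reg}(\vf)=\int_{\reg}\bfc\cdot j^{k-1}\vf+\int_{\reg}\dee(\tst\cdot j^{k-1}\vf)$, exactly as in (\ref{eq:force_and_TracSt}). The defining relation (\ref{eq:Def-Ext_Jet}) of the exterior jet, applied to the section $\lv=j^{k-1}\vf$ of $J^{k-1}\vb$, then gives $\dee(\tst\cdot j^{k-1}\vf)=\coj\tst\cdot j^{1}(j^{k-1}\vf)$. For the body term I would compose $\bfc$ with the canonical projection $\mathrm{pr}\from J^{1}(J^{k-1}\vb)\to J^{k-1}\vb$; since $\mathrm{pr}\comp j^{1}(j^{k-1}\vf)=j^{k-1}\vf$, one has $(\bfc\comp\mathrm{pr})\cdot j^{1}(j^{k-1}\vf)=\bfc\cdot j^{k-1}\vf$. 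Setting
\begin{equation}
\nhs:=\coj\tst+\bfc\comp\mathrm{pr},
\end{equation}
which is a smooth section of $\L{J^{1}(J^{k-1}\vb),\ext^{n}T^{*}\base}$ because $\tst$ is smooth and $\coj$ is first order, yields the representation (\ref{eq:Rep_Syst_NonHolSt}) at once, for every $\reg$. In coordinates this reads $\nhsb_{\ga}^{\mi Jj}=\tst_{\ga}^{\mi Jj}$ and $\nhsa_{\ga}^{\mi J}=\bfc_{\ga}^{\mi J}+\tst_{\ga,j}^{\mi Jj}$, matching (\ref{eq:Exterior_Jet_J11}).

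For uniqueness I would argue that the construction is forced: $\bfc$ is part of the data, $\tst$ is the unique field delivered by Cauchy's theorem, and hence $\nhs=\coj\tst+\bfc\comp\mathrm{pr}$ is determined. Equivalently, the pair $(\bfc,\tst)$ can be read back off $\nhs$ — the components $\nhsb_{\ga}^{\mi Jj}$ recover $\tst$ pointwise, and $\bfc_{\ga}^{\mi J}=\nhsa_{\ga}^{\mi J}-\nhsb_{\ga,j}^{\mi Jj}$ recovers the body force — so the correspondence $(\bfc,\tst)\leftrightarrow\nhs$ is a bijection and the induced $\nhs$ is unambiguous.

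The subtle point, and the step I would be most careful about, is the precise meaning of ``unique.'' The integral identity (\ref{eq:Rep_Syst_NonHolSt}) by itself cannot pin down all of $\nhs$, because $j^{1}(j^{k-1}\vf)$ is holonomic and fully symmetric in its top indices $\mi Jj$: the integrand only sees the symmetrization of $\nhsb$ in $\mi Jj$, so non-holonomic (antisymmetric) components and trade-offs between $\nhsa$ and the symmetrized $\nhsb$ at intermediate orders are invisible to it. Uniqueness must therefore be anchored to the canonical pair $(\bfc,\tst)$ — equivalently, to the requirement that the top-order part $\nhsb$ equal the Cauchy traction hyper-stress $\tst$ — rather than to the integral relation alone. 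Once this is stated explicitly, uniqueness follows immediately from the uniqueness clause of the Cauchy theorem.
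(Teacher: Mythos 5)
Your proposal is correct and follows essentially the same route as the paper: apply the generalized Cauchy theorem with $\avb=J^{k-1}\vb$ to obtain the unique traction hyper-stress $\tst$, convert the boundary term via Stokes and the exterior jet, and define $\nhs$ by $\nhs\cdot j^{1}\lv=\bfc\cdot\lv+\coj\tst\cdot j^{1}\lv$, recovering the same local components $\nhsa_{\ga}^{\mi J}=\tst_{\ga,j}^{\mi Jj}+\bfc_{\ga}^{\mi J}$, $\nhsb_{\ga}^{\mi Jj}=\tst_{\ga}^{\mi Jj}$. Your closing observation --- that the integral identity alone only sees the symmetrized part of $\nhsb$, so uniqueness must be anchored to the canonical pair $(\bfc,\tst)$ rather than to (\ref{eq:Rep_Syst_NonHolSt}) itself --- is a correct and worthwhile clarification of a point the paper leaves implicit.
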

\begin{proof}
A consistent hyper-force system induces, by the generalized Cauchy
theorem for the case $\avb=J^{k-1}\vb$, a unique traction hyper-stress
$\tst$. Thus, Equation (\ref{eq:force_and_TracSt}) applies, and
using the definition of the exterior jet, we may rewrite it as
\begin{equation}
\fc_{\reg}(\vf)=\int_{\reg}\bfc\cdot j^{k-1}\vf+\coj\tst\cdot j^{1}(j^{k-1}\vf).\label{eq:Rep_Force_Ext_Jet_Sigma}
\end{equation}

Observing Equation (\ref{eq:Rep_Force_Ext_Jet_Sigma}), the value
of the integrand at any $x\in\base$ is an $n$-alternating tensor
that is evidently a linear function of $j^{1}(j^{k-1}\vf)(x)$. For
a section $\lv$ of $J^{k-1}\vb$, set
\begin{equation}
\nhs\cdot j^{1}\lv=\bfc\cdot\lv+\coj\tst\cdot j^{1}\lv.\label{eq:Define_Non_Hol_St}
\end{equation}
Thus, $\nhs$ is a section of $\Ljj$ such that 
\begin{equation}
\fc_{\reg}(\vf)=\int_{\reg}\bigp{\bfc\cdot j^{k-1}\vf+\coj\tst\cdot j^{1}(j^{k-1}\vf)}=\int_{\reg}\nhs\cdot j^{1}(j^{k-1}\vf).
\end{equation}
In fact, $\nhs$ induces a linear functional $\wh{\nhs}_{\reg}$,
which extends $\fc_{\reg}$ to act on all sections of $J^{k-1}\vb$
by
\begin{equation}
\wh{\nhs}_{\reg}(\lv)=\int_{\reg}\bigp{\bfc\cdot\lv+\coj\tst\cdot j^{1}\lv}=\int_{\reg}\nhs\cdot j^{1}\lv.
\end{equation}
The local representation of $\nhs$ may be readily obtained from (\ref{eq:Exterior_Jet_J11})
as 
\begin{gather}
\nhs=\bigp{(\tst_{\ga,j}^{\mi Jj}+\bfc_{\ga}^{\mi J})\sbase_{\mi J}^{\ga}+\tst_{\ga}^{\mi Jj}\sbase_{\mi J}^{\ga}\otimes\bdry_{j}}\otimes\dee x\qquad0\le\abs{\mi J}\le k-1,\label{eq:Rep_Induced_NHS}\\
\nhsa_{\ga}^{\mi J}=\tst_{\ga,j}^{\mi Jj}+\bfc_{\ga}^{\mi J},\qquad\nhsb_{\ga}^{\mi Jj}=\tst_{\ga}^{\mi Jj}.
\end{gather}
\end{proof}
\begin{proposition}\label{prop:NHS_deter_traction}

There is a natural mapping 
\begin{equation}
p_{\tst}:\Ljj\tto\L{J^{k-1}\vb,\ext^{n-1}\base},
\end{equation}
whereby a non-holonomic stress field $\nhs$ induces a unique traction
hyper-stress $\tst$.

\end{proposition}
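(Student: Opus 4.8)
The plan is to assemble $p_{\tst}$ from canonical vector-bundle morphisms only, so that coordinate-independence and single-valuedness (hence uniqueness of the induced $\tst$) come for free, and then to read off the component relation in a chart. The starting point is the fundamental short exact sequence of the first jet bundle of the vector bundle $J^{k-1}\vb$,
\begin{equation}
0\tto T^{*}\base\otimes J^{k-1}\vb\tto J^{1}(J^{k-1}\vb)\tto J^{k-1}\vb\tto0,
\end{equation}
whose kernel inclusion I denote by $\incl_{V}:T^{*}\base\otimes J^{k-1}\vb\tto J^{1}(J^{k-1}\vb)$. In a chart, $\incl_{V}$ sends the basis element $\sbase_{\ga}^{\mi J}\otimes\dee x^{j}$ into the ``derivative slot'' of $J^{1}(J^{k-1}\vb)$, that is, the part of a local element $\alv$ carrying the components $\alv_{\mi Jj}^{\ga}$.

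First I would restrict a non-holonomic stress $\nhs\in\Ljj$ to this subbundle, forming $\nhs\comp\incl_{V}\in\L{T^{*}\base\otimes J^{k-1}\vb,\ext^{n}T^{*}\base}$. Reading this against the local form $\nhs=\bigp{\nhsa_{\ga}^{\mi J}\sbase_{\mi J}^{\ga}+\nhsb_{\ga}^{\mi Jj}\sbase_{\mi J}^{\ga}\otimes\bdry_{j}}\otimes\dee x$, the restriction annihilates the $\nhsa$ part and retains exactly the $\nhsb$ part, namely $\nhsb_{\ga}^{\mi Jj}\sbase_{\mi J}^{\ga}\otimes\bdry_{j}\otimes\dee x$.

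Next I would rearrange tensor factors and contract. Using $(T^{*}\base)^{*}\simeq T\base$ one has the canonical identification
\begin{equation}
\L{T^{*}\base\otimes J^{k-1}\vb,\ext^{n}T^{*}\base}\simeq\L{J^{k-1}\vb,T\base\otimes\ext^{n}T^{*}\base},
\end{equation}
and the interior-product map $T\base\otimes\ext^{n}T^{*}\base\tto\ext^{n-1}T^{*}\base$, $v\otimes\go\mapsto v\contr\go$, is a canonical isomorphism of rank-$n$ bundles (injectivity holds because contraction with a nonzero top form is injective and $\ext^{n}T^{*}\base$ is one-dimensional; no volume form is chosen). Composing the restriction with these two isomorphisms defines $p_{\tst}(\nhs)\in\L{J^{k-1}\vb,\ext^{n-1}T^{*}\base}$, a traction hyper-stress. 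The local check is immediate: $\bdry_{j}\otimes\dee x\mapsto\bdry_{j}\contr\dee x$ gives $p_{\tst}(\nhs)=\nhsb_{\ga}^{\mi Jj}\sbase_{\mi J}^{\ga}\otimes(\bdry_{j}\contr\dee x)$, which is exactly the form (\ref{eq:Rep_Tract_Str-1}) with $\tst_{\ga}^{\mi Jj}=\nhsb_{\ga}^{\mi Jj}$. This agrees with (\ref{eq:Exterior_Jet_J11}) and with the relation $\nhsb_{\ga}^{\mi Jj}=\tst_{\ga}^{\mi Jj}$ obtained in the proof of Proposition \ref{prop:Induced_Non_Holonomic}, so that $p_{\tst}$ inverts the vertical component of the exterior-jet construction.

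Since $p_{\tst}$ is built solely from the canonical inclusion $\incl_{V}$ and the canonical contraction isomorphism, it is independent of the chart and single-valued; thus each $\nhs$ determines one and only one traction hyper-stress $\tst=p_{\tst}(\nhs)$, which is the asserted uniqueness. The one point that genuinely needs verification is that $v\otimes\go\mapsto v\contr\go$ is a \emph{natural} isomorphism $T\base\otimes\ext^{n}T^{*}\base\simeq\ext^{n-1}T^{*}\base$ requiring no auxiliary metric or volume form; I expect this to be the only real obstacle, the remainder being tensor-factor bookkeeping.
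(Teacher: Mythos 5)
Your proposal is correct and follows essentially the same route as the paper: the paper also restricts $\nhs$ to the kernel of $\pi_{0}^{1}$ (which it writes as the vertical subbundle $V^{0}J^{1}(J^{k-1}\vb)\isom L(T\base,J^{k-1}\vb)$, the same subbundle you present via the jet exact sequence), and then applies the canonical contraction $T\base\otimes\ext^{n}T^{*}\base\to\ext^{n-1}T^{*}\base$ to land in $\L{J^{k-1}\vb,\ext^{n-1}T^{*}\base}$, arriving at the same local relation $\tst_{\ga}^{\mi Jj}=\nhsb_{\ga}^{\mi Jj}$. The naturality of the contraction isomorphism that you flag as the one point needing verification is exactly the isomorphism the paper invokes without further comment, and it indeed requires no metric or volume form.
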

\begin{proof}
The local expression for the action of $\nhs$ is
\begin{equation}
\nhs\cdot j^{1}\lv=\bigp{\nhs_{\ga}^{\mi J}\lv_{\mi J}^{\ga}+\nhsb_{\ga}^{\mi Jj}\lv_{\mi J,j}^{\ga}}\otimes\dee x,\qquad0\le\abs{\mi J}\le k-1.\label{eq:Action_NHS_on_Jk-1}
\end{equation}
Restricting $\nhs(x)$ to sections for which $\lv_{\mi J}^{\ga}(x)=0$
for all $\mi J$, $\abs{\mi J}\le k-1$, \ie, to elements of 
\begin{equation}
V^{0}J^{1}(J^{k-1}\vb)=\kernel\pi_{0}^{1},
\end{equation}
the action of $\nhs$ determines $\tst$ uniquely. In other words,
the inclusion $\incl_{1}^{0}$ of the vertical bundle induces 
\begin{equation}
\begin{split}\incl_{1}^{0*}:\Ljj\tto & \L{VJ^{1}(J^{k-1}\vb),\ext^{n}T^{*}\base},\\
 & \isom\L{L(T\base,J^{k-1}\vb),\ext^{n}T^{*}\base},\\
 & \isom T\base\otimes(J^{k-1}\vb)^{*}\otimes\ext^{n}T^{*}\base,\\
 & \isom(J^{k-1}\vb)^{*}\otimes\ext^{n-1}T^{*}\base,\\
 & \isom\L{J^{k-1}\vb,\ext^{n-1}\base},
\end{split}
\end{equation}
where we used the natural isomorphism $VJ^{1}\avb\isom L(T\base,\avb)$
for any vector bundle $\avb$ over $\base$, and the isomorphism 
\begin{equation}
T\base\otimes\ext^{n}T^{*}\base\tto\ext^{n-1}T^{*}\base,\qquad(v,\theta)\lmt v\contr\theta.
\end{equation}
Thus, locally, 
\begin{equation}
\tst_{\ga}^{\mi Jj}=\nhsb_{\ga}^{\mi Jj}.\label{eq:Components_of_psigma(P)}
\end{equation}
\end{proof}
\begin{proposition}

There is a natural linear differential operator $\diver$ mapping
jets of non-holonomic stress fields into body hyper-force fields.
The differential operator $\diver$ is defined by 
\begin{equation}
\diver\nhs=\coj\comp p_{\st}(\nhs)-\nhs.\label{eq:Divergence_definition}
\end{equation}
The non-holonomic stress $\nhs$ represents the force system\foreignlanguage{english}{
$(\bfc,\{\sfc_{\reg}\})$} if an only if
\begin{equation}
p_{\st}\comp\nhs=\tst,\qquad\text{and}\qquad\diver\nhs+\bfc=0.\label{eq:Stress_and_Body_Force}
\end{equation}

\end{proposition}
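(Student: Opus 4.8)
The plan is to treat the statement in two stages: first to confirm that formula (\ref{eq:Divergence_definition}) really defines an operator $\diver$ valued in body hyper-force fields, and then to read off the stated equivalence from the local representations already established. The pivotal algebraic fact I would isolate at the outset is the identity $p_\st\comp\coj=\idnt$ on traction hyper-stress fields. This is immediate from (\ref{eq:Exterior_Jet_J11}): the underlined part of $\coj\tst$ has components $\tst_\ga^{\mi Jj}$, and $p_\st$ extracts precisely this part by (\ref{eq:Components_of_psigma(P)}), so $p_\st(\coj\tst)=\tst$.

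With this in hand I would verify well-definedness by applying $p_\st$ to (\ref{eq:Divergence_definition}), obtaining $p_\st(\diver\nhs)=p_\st(\coj\comp p_\st(\nhs))-p_\st(\nhs)=p_\st(\nhs)-p_\st(\nhs)=0$. Since $p_\st$ is, up to the isomorphisms of Proposition~\ref{prop:NHS_deter_traction}, the restriction $\incl_1^{0*}$ to the vertical subbundle $V^0J^1(J^{k-1}\vb)$, the vanishing of $p_\st(\diver\nhs)$ says that $\diver\nhs$ annihilates the vertical directions. By the exact sequence $0\to V^0J^1(J^{k-1}\vb)\to J^1(J^{k-1}\vb)\to J^{k-1}\vb\to0$ defining the first jet bundle, such a morphism is the pullback along $\pi_0^1$ of a section of $\L{J^{k-1}\vb,\ext^{n}T^{*}\base}$, that is, a body hyper-force field. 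Linearity of $\diver$ is inherited from $\coj$ and $p_\st$; it is first order because $\coj$ is first order while $p_\st$ is algebraic; and it is natural since both constituents are. As a check on the cancellation, (\ref{eq:Exterior_Jet_J11}) and (\ref{eq:Components_of_psigma(P)}) give the components $(\diver\nhs)_\ga^{\mi J}=\nhsb_{\ga,j}^{\mi Jj}-\nhsa_\ga^{\mi J}$, with vanishing underlined part.

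For the equivalence I would argue both directions through these components. If $\nhs$ represents $(\bfc,\{\sfc_\reg\})$, then by Proposition~\ref{prop:Induced_Non_Holonomic} it is the unique field with $\nhsa_\ga^{\mi J}=\tst_{\ga,j}^{\mi Jj}+\bfc_\ga^{\mi J}$ and $\nhsb_\ga^{\mi Jj}=\tst_\ga^{\mi Jj}$; then $p_\st(\nhs)=\tst$ follows at once from (\ref{eq:Components_of_psigma(P)}), while the component formula for $\diver$ gives $(\diver\nhs)_\ga^{\mi J}=\tst_{\ga,j}^{\mi Jj}-(\tst_{\ga,j}^{\mi Jj}+\bfc_\ga^{\mi J})=-\bfc_\ga^{\mi J}$, i.e. $\diver\nhs+\bfc=0$. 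Conversely, the condition $p_\st(\nhs)=\tst$ fixes $\nhsb_\ga^{\mi Jj}=\tst_\ga^{\mi Jj}$, and the divergence equation then forces $\nhsa_\ga^{\mi J}=\nhsb_{\ga,j}^{\mi Jj}+\bfc_\ga^{\mi J}=\tst_{\ga,j}^{\mi Jj}+\bfc_\ga^{\mi J}$; these are exactly the components of the unique representing stress of Proposition~\ref{prop:Induced_Non_Holonomic}, so $\nhs$ represents the system.

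I expect the only genuine subtlety to be the well-definedness claim of the first stage\textemdash that the difference $\coj\comp p_\st(\nhs)-\nhs$ loses its vertical component and thereby descends to a body hyper-force. This is exactly what the identity $p_\st\comp\coj=\idnt$ dispatches; everything afterwards is bookkeeping against the representations (\ref{eq:Exterior_Jet_J11}), (\ref{eq:Components_of_psigma(P)}), and Proposition~\ref{prop:Induced_Non_Holonomic}.
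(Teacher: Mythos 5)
Your proof is correct and, at its core, rests on the same computation as the paper's: the cancellation of the $\nhsb_{\ga}^{\mi Jj}\sbase_{\mi J}^{\ga}\otimes\bdry_{j}$ terms in $\coj\comp p_{\st}(\nhs)-\nhs$, leaving $(\nhsb_{\ga,j}^{\mi Jj}-\nhsa_{\ga}^{\mi J})\sbase_{\mi J}^{\ga}\otimes\dee x$, which is exactly the paper's local representation (\ref{eq:Divergence_Local_Rep}). What you add is an invariant packaging of the well-definedness step: the identity $p_{\st}\comp\coj=\idnt$ on traction hyper-stress fields, from which $p_{\st}(\diver\nhs)=0$, together with the observation that a morphism on $J^{1}(J^{k-1}\vb)$ annihilating $V^{0}J^{1}(J^{k-1}\vb)=\kernel\pi_{0}^{1}$ factors through $\pi_{0}^{1}$ and is therefore a body hyper-force field. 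The paper reaches the same conclusion purely by inspecting the local expression; your route makes the coordinate computation a check rather than the argument, which is a genuine (if modest) gain in transparency. Your treatment of the equivalence (\ref{eq:Stress_and_Body_Force}) is likewise more explicit than the paper's, which disposes of it in one sentence by appeal to Propositions \ref{prop:Induced_Non_Holonomic} and \ref{prop:NHS_deter_traction} and to (\ref{eq:Define_Non_Hol_St}); your two-directional component verification is precisely the bookkeeping that sentence leaves to the reader, and it is correct. The one caveat, common to both proofs, is that ``represents the force system'' must be read as ``coincides with the non-holonomic stress constructed in Proposition \ref{prop:Induced_Non_Holonomic} from the Cauchy traction hyper-stress,'' since the integral identity tested only against holonomic jets would see merely the symmetrized part of $\nhsb$; you implicitly adopt this reading, which is the intended one.
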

\begin{proof}
To prove that $\diver\nhs$ is a section of $\L{J^{k-1}\vb,\ext^{n}T^{*}\base}$,
we use (\ref{eq:Components_of_psigma(P)}) and (\ref{eq:Exterior_Jet_J11}).
Thus, locally, for $0\le\abs{\mi J}\le k-1$,
\begin{equation}
\begin{split}\diver\nhs & =\bigp{\nhsb_{\ga,j}^{\mi Jj}\sbase_{\mi J}^{\ga}+\nhsb_{\ga}^{\mi Jj}\sbase_{\mi J}^{\ga}\otimes\bdry_{j}-\nhsa_{\ga}^{\mi J}\sbase_{\mi J}^{\ga}-\nhsb_{\ga}^{\mi Jj}\sbase_{\mi J}^{\ga}\otimes\bdry_{j}}\otimes\dee x,\\
 & =(\nhsb_{\ga,j}^{\mi Jj}-\nhsa_{\ga}^{\mi J})\sbase_{\mi J}^{\ga}\otimes\dee x.
\end{split}
\label{eq:Divergence_Local_Rep}
\end{equation}
In view of Propositions \ref{prop:Induced_Non_Holonomic} and \ref{prop:NHS_deter_traction},
it remains to show that $\bfc=-\diver\nhs$. This follows, however,
from (\ref{eq:Define_Non_Hol_St}) and (\ref{eq:Divergence_definition}).
\end{proof}

\section{Induced Variational Stresses\label{sec:Induced-Variational-Stresses}}

One could naively assume that just as in the case $k=1$, a variational
hyper-stress, a section of $\L{J^{k}\vb,\ext^{n}T^{*}\base}$ will
induce a unique traction hyper-stress. This, however, is not the case.
It is sufficient to observe that the in the local expression for the
variational hyper-stress $\std_{\ga}^{\mi I}\sbase_{\mii}^{\ga}\otimes\dee x$,
$\abs{\mii}\le k$, the components $\std_{\ga}^{\mii}$ are symmetric
under permutations of $\mii$. On the other hand, in the local expression
for the traction hyper-stress $\st_{\ga}^{\mi Ji}\dbase_{\mi J}^{\ga}\otimes(\partial_{i}\contr\dee x)$,
$\abs{\mi J}\le k-1$, the arrays $\tst^{\mi Ji}$, which are of the
same order as $\std_{\ga}^{\mi I}$, need not be symmetric under all
permutations of $\mi Jj$. In other words, unlike the non-holonomic
stress that is in one-to-one correspondence with a force system, the
variational hyper-stress encodes less information and cannot be used
to determine the hyper-traction on the boundaries of subbodies. The
present section studies these issues with some detail.

\subsection{The variational hyper-stress induced by a non-holonomic stress\label{subsec:VarST_IndBy_NHD}}

Let
\begin{equation}
\incl_{1,k-1}^{k}:J^{k}\vb\tto J^{1}(J^{k-1}\vb),\qquad j^{k}\vf(x)\lmt j^{1}(j^{k-1}\vf)(x),
\end{equation}
be the natural inclusion of holonomic elements. Then, the dual mapping
\begin{equation}
\incl_{1,k-1}^{k*}:\Ljj\tto\L{J^{k}\vb,\ext^{n}T^{*}\base},\qquad\nhs\lmt\nhs\comp\incl_{1,k-1}^{k},
\end{equation}
assigns variational hyper-stresses to non-holonomic stresses. Let
$\std=\incl_{1,k-1}^{k*}(\nhs)$. The local representations of the
actions of $\std$ and $\nhs$ imply that for any section $\vf$ of
$\vb$,
\begin{alignat*}{2}
\std_{\ga}^{\mii}\vf_{,\mii}^{\ga} & =\nhsa_{\ga}^{\mi J}\vf_{,\mi J}^{\ga}+\nhsb_{\ga}^{\mi Jj}\vf_{,\mi Jj}, & 0\le\abs{\mii}\le k,\quad0\le\abs{\mi J}\le k-1,\\
 & =\nhsa_{\ga}\vf^{\ga}+\tm{\nhs}_{\ga}^{\mi Lj}\vf_{,\mi Lj}^{\ga}+\nhsb_{\ga}^{\mi Lj}\vf_{,\mi Lj}^{\ga}+\itm{\nhsb}_{\ga}^{\mi K}\vf_{,\mi K}^{\ga}, & 0\le\abs{\mi L}\le k-2,\quad\abs{\mi K}=k,\\
 & =\nhsa_{\ga}\vf^{\ga}+(\tm{\nhs}_{\ga}^{\mi Lj}+\nhsb_{\ga}^{\mi Lj})\vf_{,\mi Lj}^{\ga}+\itm{\nhsb}_{\ga}^{\mi K}\vf_{,\mi K}^{\ga},\\
 & =\nhsa_{\ga}\vf^{\ga}+(\nhs_{\ga}^{\mi J}+\itm{\nhsb}_{\ga}^{\mi J})\vf_{,\mi J}^{\ga}+\itm{\nhsb}_{\ga}^{\mi K}\vf_{,\mi K}^{\ga}, & 1\le\abs{\mi J}\le k-1,\quad\abs{\mi K}=k.
\end{alignat*}
As the relations above should hold for any point $x\in\base$ and
since compatibility imposes no restrictions on the point values of
the various derivatives, we conclude that $\std=\incl_{1,k-1}^{k*}(\nhs)$
is represented by
\begin{equation}
\begin{alignedat}{2}\std_{\ga}^{\mi K} & =\itm{\nhsb}_{\ga}^{\symm{\mi K}}, & \qquad\abs{\mi K}=k,\\
\std_{\ga}^{\mi J} & =\nhsa_{\ga}^{\mi J}+\itm{\nhsb}_{\ga}^{\symm{\mi J}}, & \qquad1\le\abs{\mi J}<k,\\
\std_{\ga} & =\nhsa_{\ga}.
\end{alignedat}
\label{eq:Rep_dual_ijj}
\end{equation}

\subsection{The reduced exterior jet}

The \emph{reduced exterior jet} of a traction hyper-stress $\tst$
is defined as
\begin{equation}
\coj_{k}\tst:=\incl_{1,k-1}^{k*}(\coj\tst),
\end{equation}
that is, 
\begin{equation}
\begin{split}\coj_{k}\tst\cdot j^{k}\vf & =\coj\tst\cdot j^{1}(j^{k-1}\vf)\end{split}
.\label{eq:Def_Reduced_Cojet}
\end{equation}

Using (\ref{eq:Rep_dual_ijj}) and (\ref{eq:Exterior_Jet_J11}) one
obtains immediately,
\begin{equation}
\begin{alignedat}{2}(\coj_{k}\tst)_{\ga}^{\mi K} & =\itm{\tst}_{\ga}^{\symm{\mi K}}, & \qquad\abs{\mi K}=k,\\
(\coj_{k}\tst)_{\ga}^{\mi J} & =\tst_{\ga,j}^{\mi Jj}+\itm{\tst}_{\ga}^{\symm{\mi J}}, & \qquad1\le\abs{\mi J}<k,\\
(\coj_{k}\tst)_{\ga} & =\tst_{\ga,j}^{j}.
\end{alignedat}
\end{equation}
Evidently, one can obtain these relations directly from (\ref{eq:Def_Reduced_Cojet}).

\subsection{The variational hyper-stress induced by a consistent hyper-force
system}

Let $(\bfc,\{\sfc_{\reg}\})$ be a consistent hyper-force system corresponding
to the traction hyper-stress $\tst$. Then, the corresponding unique
non-holonomic stress, as in (\ref{eq:Rep_Induced_NHS}), determines
through (\ref{eq:Rep_dual_ijj}) a unique variational hyper-stress
$\std$ given locally by
\begin{equation}
\begin{alignedat}{2}\std_{\ga}^{\mi K} & =\itm{\tst}_{\ga}^{\symm{\mi K}}, & \qquad\abs{\mi K}=k,\\
\std_{\ga}^{\mi J} & =\tst_{\ga,j}^{\mi Jj}+\bfc_{\ga}^{\mi J}+\itm{\tst}_{\ga}^{\symm{\mi J}}, & \qquad1\le\abs{\mi J}<k,\\
\std_{\ga} & =\tst_{\ga,j}^{j}+\bfc_{\ga}.
\end{alignedat}
\label{eq:VarSt_DeterBy_ForceSys}
\end{equation}
As mentioned above, only the symmetrized components of $\tst$ determine
$\std$. Conversely, specifying $\std$ and $\bfc$ is not sufficient
in order to determine a unique traction hyper-stress.

One may be tempted to postulate a-priori that the components of $\tst$
be symmetric. However, as shown below, symmetry of the components
of elements of $\L{J^{k-1}\vb,\ext^{n-1}T^{*}\base}$ is not an invariant
property.

\begin{example}
\label{exa:k=00003D2-1}Consider the case where $\vb=\base\times\reals$
and $k-1=3$. Let $x^{i'}=x^{i'}(x^{i})$ be a local coordinate transformation.
Then,
\begin{equation}
\begin{split}\vf_{,i'} & =\vf_{,i}x_{,i'}^{i},\\
\vf_{,i'j'} & =\vf_{,ij}x_{,i'}^{i}x_{,j'}^{j}+\vf_{,i}x_{,i'j'}^{i},\\
\vf_{,i'j'k'} & =\vf_{,ijk}x_{,i'}^{i}x_{,j'}^{j}x_{,k'}^{k}+\vf_{,ij}x_{,i'k'}^{i}x_{,j'}^{j}+\vf_{,ij}x_{,i'}^{i}x_{,j'k'}^{j}\\
 & \qquad+\vf_{,ik}x_{,i'j'}^{i}x_{,k'}^{k}+\vf_{,i}x_{,i'j'k'}^{i}.
\end{split}
\end{equation}
Invariance of the action imposes the condition that
\begin{equation}
\begin{split}\tst^{\mii l}\vf_{,\mii}\bdry_{l}\contr\dee x & =\tst^{\mii'l'}\vf_{,\mii'}\bdry_{l'}\contr\dee x',\qquad\abs{\mii}\le3,\\
 & =\tst^{\mii'l'}\vf_{,\mii'}x_{,l'}^{l}\bdry_{l}\contr\dee x\Jac,
\end{split}
\end{equation}
where $\Jac$ is the Jacobian determinant. After substitution of the
transformed derivatives, and in view of the arbitrariness of the values
of $\vf_{,\mii}$, we obtain
\begin{equation}
\begin{split}\tst^{ijkl}/\Jac & =\tst^{i'j'k'l'}x_{,i'}^{i}x_{,j'}^{j}x_{,k'}^{k}x_{,l'}^{l},\\
\tst^{ijl}/\Jac & =\tst^{i'j'l'}x_{,i'}^{i}x_{,j'}^{j}x_{,l'}^{l}+\tst^{i'j'k'l'}(x_{,i'k'}^{i}x_{,j'}^{j}x_{,l'}^{l}\\
 & \qquad\qquad\qquad+x_{,i'}^{i}x_{,j'k''}^{j}x_{,l'}^{l}+x_{,i'j'}^{i}x_{,k'}^{k}x_{,l'}^{l}),\\
\tst^{il}/\Jac & =\tst^{i'l'}x_{,i'}^{i}x_{,l'}^{l}+\tst^{i'j'l'}x_{,i'j'}^{i}x_{,l'}^{l}+\tst^{i'j'k'l'}x_{,i'j'k'}^{i}x_{,l'}^{l},\\
\tst^{l}/\Jac & =\tst^{l'}x_{,l'}^{l}.
\end{split}
\label{eq:-Example_k=00003D4-1}
\end{equation}
We note that in case $\tst^{i'j'k'l'}$ is symmetric, so is $\tst^{ijkl}$.
However, even if all the components $\tst^{\mii'}$ are symmetric,
this need not hold for $\tst^{ijl}$ and $\tst^{il}$. 

We conclude that symmetry of the components of a traction hyper-stress
is not an invariant property.
\end{example}

\begin{example}
It is observed that for the case $k=2$, so that $\tst^{ijl}=\tst^{ijkl}=0$
in the last equation, $\st^{il}$ is symmetric if and only if $\tst^{i'l'}$
is symmetric. Hence, for $k=2$, one may impose the condition that
$\tst^{ij}$ is symmetric. This, together with the condition $\bfc_{\ga}^{i}=0$,
imply that the variational stress determines a unique traction hyper-stress
and a unique body hyper-force.
\end{example}

In light of the foregoing discussion we make the following
\begin{defn}
A variational hyper-stress $\std$ is said to be \emph{consistent}
with a non-holonomic stress $\nhs$ if $\std=\incl_{1,k-1}^{k*}(\nhs)$.
\end{defn}

\begin{proposition}

If a variational hyper-stress $\std$ is consistent with a non-holonomic
stress $\nhs$ representing a hyper-force system\foreignlanguage{english}{
}$\{\fc_{\reg}\}=(\bfc,\{\sfc_{\reg}\})$, then, although $\bfc$
and $\{\sfc_{\reg}\}$ cannot be determined by $\std$, the force
functionals $\{\fc_{\reg}\}$ can be determined by $\std$ in the
form
\begin{equation}
\fc_{\reg}(\vf)=\int_{\reg}\std\cdot j^{k}\vf.
\end{equation}
In other words, while $\std$ represents the force system $\{\fc_{\reg}\}$
as in Sections \ref{subsec:Forces_and_Stress_Repr}, \ref{subsec:Smooth-variational-hyper-stresse},
it does not determine the corresponding body hyper-force and hyper-tractions.

\end{proposition}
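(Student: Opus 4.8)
The plan is to get the positive assertion by unwinding the definition of consistency, thereby reducing it to the representation already established in Proposition \ref{prop:Induced_Non_Holonomic}, and then to extract the negative assertion from the explicit local formulae (\ref{eq:VarSt_DeterBy_ForceSys}) together with Example \ref{exa:k=00003D2-1}.

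First I would use the hypothesis that $\std$ is consistent with $\nhs$, namely $\std=\incl_{1,k-1}^{k*}(\nhs)=\nhs\comp\incl_{1,k-1}^{k}$. The holonomic inclusion is defined precisely so that $\incl_{1,k-1}^{k}(j^{k}\vf(x))=j^{1}(j^{k-1}\vf)(x)$, so evaluating the vector-bundle morphism $\std$ along a genuine $k$-jet yields the pointwise equality of $n$-forms on $\base$,
\begin{equation}
\std\cdot j^{k}\vf=\nhs\cdot\bigp{\incl_{1,k-1}^{k}(j^{k}\vf)}=\nhs\cdot j^{1}(j^{k-1}\vf).
\end{equation}
This identity is the whole point of the positive direction: the variational stress paired with the true $k$-jet reproduces exactly the pairing of the non-holonomic stress with the holonomic image of that jet.

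Integrating over a subbody $\reg$ and invoking the representation (\ref{eq:Rep_Syst_NonHolSt}) of the force system by $\nhs$, I would conclude
\begin{equation}
\int_{\reg}\std\cdot j^{k}\vf=\int_{\reg}\nhs\cdot j^{1}(j^{k-1}\vf)=\fc_{\reg}(\vf),
\end{equation}
which is the asserted formula. I do not anticipate any analytic difficulty here, since no estimate or limiting process intervenes; the content is purely the compatibility between the embedding $\incl_{1,k-1}^{k}$ and the prolongation $\vf\mapsto j^{k}\vf$.

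The part that actually requires care is the negative claim that $\bfc$ and $\{\sfc_{\reg}\}$ cannot be recovered from $\std$. Here I would read off from the component formulae (\ref{eq:VarSt_DeterBy_ForceSys}) that $\tst$ enters $\std$ only through symmetrized combinations: the top-order part $\std_{\ga}^{\mi K}=\itm{\tst}_{\ga}^{\symm{\mi K}}$ is symmetric in $\mi K$, and the lower-order parts likewise see $\tst$ only via $\itm{\tst}_{\ga}^{\symm{\mi J}}$, apart from a divergence and the body force. Consequently the non-symmetric part of $\tst$ — and with it the finer content of $\{\sfc_{\reg}\}$ — is invisible to $\std$, so the assignment $(\bfc,\tst)\mapsto\std$ cannot be inverted. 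One might hope to kill the ambiguity by declaring $\tst$ symmetric, but Example \ref{exa:k=00003D2-1} shows, already for $k-1=3$, that symmetry of a traction hyper-stress is not preserved under change of coordinates and hence cannot be imposed invariantly. The expected main obstacle is thus \emph{expository} rather than technical: one must argue clearly that symmetrization is the sole channel through which $\tst$ influences $\std$, which the explicit components make manifest.
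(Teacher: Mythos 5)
Your proof is correct and follows essentially the same route as the paper: the paper's own proof is exactly the observation that consistency gives $\int_{\reg}\std\cdot j^{k}\vf=\int_{\reg}\nhs\cdot j^{1}(j^{k-1}\vf)=\fc_{\reg}(\vf)$. Your additional justification of the negative assertion (that $\tst$ enters $\std$ only through its symmetrized components, and that symmetry cannot be imposed invariantly) is what the paper places in the discussion immediately preceding the proposition rather than in the proof itself, so there is no substantive difference.
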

\begin{proof}
The assertion follows immediately from the fact that consistency implies
that
\begin{equation}
\fc_{\reg}(\vf)=\int_{\reg}\nhs\cdot j^{1}(j^{k-1}\vf)=\int_{\reg}\std\cdot j^{k}\vf.
\end{equation}
\end{proof}

\section{Spaces of Hyper-Stresses and Constitutive Relations\label{sec:Spaces-of-Hyper-Stresses}}

Hyper-stresses have been defined above in the context of some particular
configuration $\conf$. To formulate constitutive relations, one has
to identify the appropriate spaces to which hyperstresses belong for
all possible configurations. The present section is concerned with
these issues.

\subsection{The space of variational hyper-stress density values\label{subsec:strDensVals}}

We first recall the following two natural isomorphisms. As fiber bundles
over $\base$,
\begin{equation}
VJ^{k}\fb\simeq J^{k}V\fb,
\end{equation}
and for any section $\conf:\base\to\fb$,
\begin{equation}
(j^{k}\conf)^{*}VJ^{k}\fb\simeq J^{k}(\conf^{*}V\fb)\label{eq:VJ=00003DJV-isom}
\end{equation}
(see \cite[Theorem 4.4.1]{saunders_geometry_1989} and \cite[Theorem. 17.1, p. 82]{palais_foundations_1968}). 

For a configuration $\conf$ and a point $x\in\base$, consider the
value $\std(x)$ of some variational hyper-stress field at the configuration
$\conf$. Using the isomorphism above and recalling (\ref{eq:VYisW}),
one has,
\begin{equation}
\begin{split}\std(x)\in & \;\L{(J^{k}\vb)_{x},\left(\ext^{n}T^{*}\base\right)_{x}},\\
 & =\L{(J^{k}(\conf^{*}V\fb)_{x},\left(\ext^{n}T^{*}\base\right)_{x}},\\
 & =\L{((j^{k}\conf)^{*}VJ^{k}\fb)_{x},\left(\ext^{n}T^{*}\base\right)_{x}},\\
 & =\L{(VJ^{k}\fb)_{j^{k}\conf(x)},\pi^{k*}\left(\ext^{n}T^{*}\base\right)_{j^{k}\conf(x)}},\\
 & =\L{VJ^{k}\fb,\pi^{k*}\ext^{n}T^{*}\base}_{j^{k}\conf(x)},\\
 & =\left[(j^{k}\conf)^{*}\L{VJ^{k}\fb,\pi^{k*}\ext^{n}T^{*}\base}\right]_{x}.
\end{split}
\label{eq:Space_of_VarStresses}
\end{equation}
One concludes that the values of variational hyper-stress fields may
always be identified with elements of the vector bundle 
\begin{equation}
\pi_{\spstd}:\spstd:=\L{VJ^{k}\fb,\pi^{k*}\ext^{n}T^{*}\base}\tto J^{k}\fb,
\end{equation}
independently of the configuration under consideration. For any configuration
$\conf$, variational hyper-stress densities are valued in 
\begin{equation}
(j^{k}\conf)^{*}\spstd\simeq\L{J^{k}(\conf^{*}V\fb),\ext^{n}T^{*}\base}.
\end{equation}

Intuitively, sections of $J^{k}\fb$ may be thought of as ``local
configurations'' and sections of $VJ^{k}\fb$ may be conceived as
variations thereof or as ``local velocity fields''. Thus, following
this line of thought, variational hyper-stresses may be thought of
as ``local forces''.

\subsection{The space of traction hyper-stress values}

In analogy with Section \ref{subsec:strDensVals}, we identify here
the space where traction hyper-stress fields assume their values,
independently of a particular configuration. Thus, for a configuration
$\conf$ and a point $x\in\base$, we consider the value $\tst(x)$
of some traction hyper-stress field at the configuration $\conf$.
One has,
\begin{equation}
\begin{split}\tst(x)\in & \;\L{(J^{k-1}\vb)_{x},\left(\ext^{n-1}T^{*}\base\right)_{x}},\\
 & =\L{(J^{k-1}(\conf^{*}V\fb)_{x},\left(\ext^{n-1}T^{*}\base\right)_{x}},\\
 & =\L{((j^{k-1}\conf)^{*}VJ^{k-1}\fb)_{x},\left(\ext^{n-1}T^{*}\base\right)_{x}},\\
 & =\L{(VJ^{k-1}\fb)_{j^{k-1}\conf(x)},\pi^{k-1*}\left(\ext^{n-1}T^{*}\base\right)_{j^{k-1}\conf(x)}},\\
 & =\L{VJ^{k-1}\fb,\pi^{k-1*}\ext^{n-1}T^{*}\base}_{j^{k-1}\conf(x)},\\
 & =\left[(j^{k-1}\conf)^{*}\L{VJ^{k-1}\fb,\pi^{k-1*}\ext^{n-1}T^{*}\base}\right]_{x}.
\end{split}
\end{equation}
It is concluded that the values of traction hyper-stress fields may
be identified with elements of the vector bundle
\begin{equation}
\pi_{\sptst}:\sptst:=\L{VJ^{k-1}\fb,\pi^{k-1*}\ext^{n-1}T^{*}\base}\tto J^{k-1}\fb.
\end{equation}

For a given configuration $\conf$, traction hyper-stresses are valued
in 
\begin{equation}
(j^{k-1}\conf)^{*}\sptst\simeq\L{(j^{k-1}\conf)^{*}VJ^{k-1}\fb,\ext^{n-1}T^{*}\base}.
\end{equation}

\subsection{The space of values of non-holonomic stresses}

Let 
\begin{equation}
\lc:\base\tto J^{k-1}\fb
\end{equation}
 be a smooth section and let $x\in\base$. Than, the value of a non-holonomic
stress field $\nhs$ at $x$ satisfies
\begin{equation}
\begin{split}\nhs(x)\in & \;\L{(J^{1}(J^{k-1}\vb))_{x},\left(\ext^{n}T^{*}\base\right)_{x}},\\
 & =\L{(J^{1}(\lc^{*}VJ^{k-1}\fb))_{x},\left(\ext^{n}T^{*}\base\right)_{x}},\\
 & =\L{\bigp{(j^{1}\lc)^{*}(VJ^{1}(J^{k-1}\fb))}_{x},\left(\ext^{n}T^{*}\base\right)_{x}},\\
 & =\L{\bigp{VJ^{1}(J^{k-1}\fb)}_{j^{1}\lc(x)},\pi^{1*}\left(\ext^{n}T^{*}\base\right)_{j^{1}\lc(x)}},\\
 & =\L{\bigp{VJ^{1}(J^{k-1}\fb)},\pi^{1*}\left(\ext^{n}T^{*}\base\right)}_{j^{1}\lc(x)},\\
 & =\left[(j^{1}\lc)^{*}\L{\bigp{VJ^{1}(J^{k-1}\fb)},\pi^{1*}\left(\ext^{n}T^{*}\base\right)}\right]_{x}.
\end{split}
\label{eq:Space_of_NonHol_Stresses}
\end{equation}
Here, we view $\pi^{k-1}:J^{k-1}\fb\to\base$ as a fiber bundle over
$\base,$ and so $\pi^{1}:J^{1}(J^{k-1}\fb)\to\base$ is the natural
projection. Hence, values of non-holonomic hyper-stress fields may
be viewed as elements of the vector bundle
\begin{equation}
\pi_{\spnhs}:\spnhs:=\L{\bigp{VJ^{1}(J^{k-1}\fb)},\pi^{1*}\left(\ext^{n}T^{*}\base\right)}\tto J^{1}(J^{k-1}\fb).
\end{equation}
For a given section $\lc:\base\to J^{k-1}\fb$, non-holonomic stresses
at $\lc$ are valued in 
\begin{equation}
(j^{1}\lc)^{*}\spnhs=\L{J^{1}(\lc^{*}VJ^{k-1}\fb),\ext^{n}T^{*}\base}.
\end{equation}

\subsection{Elastic constitutive relations}

Once the spaces of hyper-stresses have been identified, one may introduce
constitutive relations for the variational hyper-stresses and for
the non-holonomic stresses. 

An elastic constitutive relation for the variational hyper-stress
is a section
\begin{equation}
\Psi:J^{k}\fb\tto\spstd=\L{VJ^{k}\fb,\pi^{k*}\ext^{n}T^{*}\base}
\end{equation}
of $\pi_{\spstd}$. Thus, a constitutive relation assigns a value
of a variational hyper-stress at a point $x$ to a $k$-jet of a section
at $x$. In particular, for a section $\conf$, 
\begin{equation}
\Psi\comp j^{k}\conf:\base\tto\spstd
\end{equation}
is identified with a variational hyper-stress field according to (\ref{eq:Space_of_VarStresses}).

Similarly, an elastic constitutive relation for the non-holonomic
stress is a section
\begin{equation}
\Phi:J^{1}(J^{k-1}\fb)\tto\spnhs=\L{\bigp{VJ^{1}(J^{k-1}\fb)},\pi^{1*}\left(\ext^{n}T^{*}\base\right)}
\end{equation}
of $\pi_{\spnhs}$. For a section $\lc:\base\to J^{k-1}\fb$,
\begin{equation}
\Phi\comp j^{1}\lc:\base\tto\spnhs
\end{equation}
is identified with a non-holonomic stress field according to (\ref{eq:Space_of_NonHol_Stresses}).
In particular, for a section $\conf:\base\to\fb$, one has the induced
non-holonomic stress field $\Phi\comp j^{1}(j^{k-1}\conf)$.

\begin{center}
\par\end{center}

\noindent %

\bibliographystyle{alpha}

\end{document}